\newcommand{\D}{\mathcal{Q}}
\newcommand{\Dist}{D}
\newcommand{\UINTa}[1]{\mathcal{U}^{\mathrm{SI}}_{{#1}}}
\newcommand{\SIa}[1]{\mathsf{SetInt}_{{#1}}}
\renewcommand{\log}{\lg}
\newcommand{\Probes}{\mathcal{P}}
\newcommand{\Source}{\mathcal{L}}
\newcommand{\Dest}{\mathcal{R}}
\newcommand{\Tree}{\mathcal{T}}
\DeclareMathOperator{\E}{\mathbb{E}}
\DeclareMathOperator{\Var}{\mathrm{Var}}
\newcommand{\eps}{\varepsilon}
\newenvironment{customlem}[1]
  {\innercustomlem}
  {\endinnercustomlem}
\newtheorem{observation}{Observation}
\newtheorem{property}{Property}
\newtheorem{lemma}{Lemma}
\newtheorem{theorem}{Theorem}
\newtheorem{definition}{Definition}
\title{DecreaseKeys are Expensive for External Memory Priority Queues}
\author{Kasper Eenberg\thanks{MADALGO,
	Center for Massive Data Algorithmics, 
	a Center of the Danish National Research Foundation, grant DNRF84.
	Department of Computer Science, Aarhus University.
	$\{$\texttt{kse,larsen}$\}$\texttt{@cs.au.dk}. Larsen is also
        supported by a Villum Young Investigator Grant and an AUFF Starting Grant.}
 \and 
 	Kasper Green Larsen\footnotemark[1] \and Huacheng Yu\thanks{Department of Computer Science, Stanford University. \texttt{yuhch123@gmail.com}. Supported by NSF CCF-1212372.}}
\date{}
\begin{document}
\maketitle
\begin{abstract} 
One of the biggest open problems in external memory data structures is
the priority queue problem with DecreaseKey operations. If only Insert
and ExtractMin operations need to be supported, one can design a
comparison-based priority queue performing $O((N/B)\lg_{M/B} N)$ I/Os
over a sequence of $N$ operations, where $B$ is the disk block size in
number of words and $M$ is the main memory size in number of
words. This matches the lower bound for comparison-based sorting and
is hence optimal for comparison-based priority queues. However, if we
also need to support DecreaseKeys, the performance of the best known
priority queue is only $O((N/B) \lg_2 N)$ I/Os. The big open question
is whether a degradation in performance really is necessary. We answer
this question affirmatively by proving a lower bound of $\Omega((N/B)
\lg_{\lg N} B)$ I/Os for processing a sequence of $N$ intermixed
Insert, ExtraxtMin and DecreaseKey operations. Our lower bound is
proved in the cell probe model and thus holds also for
non-comparison-based priority queues.
\end{abstract}
\thispagestyle{empty}
\newpage 
\setcounter{page}{1}
\section{Introduction}
The priority queue is one of the most fundamental data structures. Priority queues support maintaining a set $S$ of keys with associated priorities. The three most commonly supported operations on a priority queue are the following:
\begin{itemize}
\item \textbf{Insert}($k,p$): Add the key $k$ to $S$ with associated priority $p$.
\item \textbf{ExtractMin}: Remove and return the (key-priority)-pair $(k,p)$ from $S$, such that $p$ is minimal amongst all priorities associated to keys in $S$. If multiple elements have the same smallest priority, an arbitrary of these pairs may be returned.
\item \textbf{DecreaseKey}($k,p$): Given a (key,priority)-pair $(k,p)$ such that there is already a pair $(k,p')$ present in $S$, change the priority associated with $k$ to the minimum of $p$ and $p'$.
\end{itemize}
The classic solution to the problem is the binary heap~\cite{cormen:introduction} data structure, which supports all three operation in $O(\lg N)$ time, where $N$ is the size of $S$. This data structure assumes only that priorities can be compared and thus is optimal by reduction from comparison-based sorting. Another classic solution is the Fibonacci heap~\cite{Fredman:fibo} which supports the DecreaseKey in amortized $O(1)$ time per operation, Insert in worst case $O(1)$ time and ExtractMin in $O(\lg N)$ amortized time. A paper by Brodal~\cite{Brodal:pq} shows how to make all these operations run in worst case time as well. If we assume priorities and keys are integers bounded by a polynomial in $N$, we can get below the comparison-based sorting lower bound. In fact, Thorup~\cite{Thorup:2007:equiv} showed that priority queues and integer sorting are equivalent in this setting, and thus by using the currently fastest integer sorting algorithms~\cite{HanTho,Han:sort}, one obtains a priority queue with either expected amortized $O(\sqrt{\lg \lg N})$ time for each operation, or deterministic amortized $O(\lg \lg N)$ time per operation. One important thing to remark for integer keys (or even just keys that can be stored in hash tables with $O(1)$ expected amortized time bounds), is that we can always support the DecreaseKey operation in the same time bound as Insert and ExtractMin if we allow Las Vegas randomization and amortization in all three bounds. For the interested reader, see the reduction in Section~\ref{sec:internaldecrease}. In addition to the equivalence to sorting, Thorup~\cite{thorup:integerdecrease} has also presented a deterministic priority queue for integer priorities which supports constant time Insert and DecreaseKey, while supporting ExtractMin in $O(\lg \lg N)$ time.

\paragraph{External Memory Data Structures.}
In many applications, in particular in database systems, the maintained data set $S$ is too large to fit in the main memory of the machine. In such cases, most of the data structure resides in slow secondary memory (disk). In this setting, classic models of computation such as the word-RAM are very poor predictors of performance. This is because random access to disk is several orders of magnitude slower than random access in RAM, and thus memory accesses are far more expensive than CPU instructions. The external memory model of Aggarwal and Vitter~\cite{av-iocsr-88} was defined to make better predictions of actual performance of algorithms when data resides on disk. In the external memory model, a machine is equipped with an infinitely sized disk and a main memory that can hold only $M$ elements. The disk is partitioned into blocks that can store $B$ elements each. A data structure in this model can only perform computation on data in main memory. A data structure transfers data between external memory and main memory in blocks of $B$ elements. The transfer of a block is called an I/O. The query time and update time of a data structure is measured only in the number of I/Os spend and thus computation on data in main memory is free of charge. The motivation for moving blocks of $B$ elements in one I/O, is that reading or writing $B$ consecutive locations on disk is essentially just as fast as a single random access, even when $B$ is in the order of a million elements. Thus the goal in external memory data structures is to make sure that memory accesses are very local.

Note that in external memory, the cost of scanning $N$ elements on disk is only $O(N/B)$ I/Os (and not $O(N)$). Another important bound is the sorting bound, which says that $N$ comparable keys can be sorted in $O(\textrm{Sort}(N))=O((N/B)\lg_{M/B}(N/M))$ I/Os~\cite{av-iocsr-88} and this is optimal. A natural assumption is that $N \geq M^{1+\eps}$ for some constant $\eps>0$, in which case the bound simplifies to $O((N/B)\lg_{M/B} N)$. For readability of the bounds, we make this assumption in the remainder of the paper. The relationship between $M$ and $B$ is such that we always require $M \geq 2B$. This ensures that the main memory can hold at least two blocks of data. In the literature, it is often assumed that $M \geq B^2$, which is known as the \emph{tall-cache} assumption.

\paragraph{Priority Queues in External Memory.}
The priority queue is also one of the most basic data structures in the external memory model of computation. By taking the classic binary heap and modifying it to an $M/B$-ary tree with buffers of $M$ elements stored in each node~\cite{Fadel:heaps}, one can support the Insert and ExtractMin operation in amortized $O((1/B)\lg_{M/B} N)$ I/Os per operation. For comparison based priority queues, this bound is optimal by reduction from sorting. Now the really interesting thing is that if we are also to support the DecreaseKey operation, the best known priority queue, due to Kumar and Schwabe~\cite{Kumar:tournament} has all three operations running in amortized $O((1/B)\lg_2 N)$ I/Os per operation. This is in sharp contrast to classic models of computation where one can always support DecreaseKey at no extra cost if allowed Las Vegas randomization and amortization. Moreover, the classic Fibonacci heap even has the DecreaseKey running in $O(1)$ time. 

In addition to priority queues being interesting in their own right, they also have immediate implications for Single Source Shortest Paths (SSSP) in external memory. The original paper of Kumar and Schwabe introduced their priority queue in order to obtain more efficient algorithms for external memory SSSP. They show that solving SSSP on a graph $G=(V,E)$ can be done in $O(|V|)$ I/Os plus $O(|E|)$ Inserts, $O(|E|)$ DecreaseKeys and $O(|E|)$ ExtractMin operations on an external memory priority queue. In their solution, it is crucial that the DecreaseKey operation has precisely the interface we defined earlier (i.e. taking a key and a new priority as argument without knowing the previous priority).
Combined with their priority queue, their reduction gives the current fastest external memory SSSP algorithm for dense graphs, running in $O(|V| + (|E|/B)\lg_2 N)$ I/Os. Thus faster priority queues with DecreaseKeys immediately yield faster external memory SSSP algorithms for dense graphs (see reference below for the sparse graph case). The big open question, which has remained open for at least the 20 years that have past since~\cite{Kumar:tournament}, is whether or not a degradation in performance really is necessary if we want to support DecreaseKeys.

 Our main result is an affirmative answer to this question:
\begin{theorem}
\label{thm:main}
Any, possibly Las Vegas randomized, external memory priority queue supporting Insert in expected amortized $t_I$ I/Os, ExtractMin in expected amortized $t_E$ I/Os and DecreaseKey in expected amortized $t_D$ I/Os, must have $\max\{t_I,t_E,t_D\} = \Omega((1/B) \lg_{\lg N} B)$ when $N \geq M^{4+\eps}$. The lower bound holds also for non-comparison-based priority queues.
\end{theorem}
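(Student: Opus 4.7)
The plan is to prove the lower bound in the cell-probe model via a chronogram-style information-transfer argument, driven by a hard distribution that genuinely exercises DecreaseKey. I would construct a random operation sequence consisting of an initial batch of $N$ Inserts with identical ``dummy'' top priorities (so the structure initially knows nothing useful), followed by $\Theta(N/B)$ rounds in which each round performs a block of $\Theta(B)$ DecreaseKeys --- each resetting the priority of a uniformly random pending key to a fresh uniformly random ``real'' priority --- followed by a block of $\Theta(B)$ ExtractMins. The DecreaseKeys inject fresh entropy into the internal state, and the following ExtractMins are forced to decode enough of it to output the correct key--priority pairs in the correct order. Morally each round reduces from a Set-Intersection / Indexing-style communication problem, which is the source of the information-theoretic hardness.

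The main technical step is the epoch decomposition for a fixed batch $Q$ of $B$ consecutive ExtractMins. Partition the operations preceding $Q$ into epochs $E_1, E_2, \ldots, E_L$ with $|E_i| = (\lg N)^i$, so that $L = \Theta(\lg_{\lg N} B)$ epochs suffice to reach back to a window of size $\Omega(B)$. For each epoch, call a cell \emph{$i$-useful for $Q$} if some operation of $Q$ probes it and its last write before $Q$ occurred in $E_i$. An encoding argument then shows that in expectation $Q$ must read $\Omega(1)$ $i$-useful cells: otherwise one could encode the random bits injected during $E_i$ using (a) the memory state just before $E_i$, (b) the non-$i$-useful cells read during $Q$, and (c) the answers of $Q$, at a total length beating the Shannon entropy of those bits. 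Summing $\Omega(1)$ over the $L$ epochs and over $\Theta(N/B)$ query batches yields the claimed $\Omega((N/B) \lg_{\lg N} B)$ aggregate I/Os, and hence $\max\{t_I,t_E,t_D\} = \Omega((1/B)\lg_{\lg N} B)$.

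The main obstacle is making the encoding argument survive the external-memory blocking. A single I/O fetches a whole block of $B$ words of $\Theta(\lg N)$ bits each, so na\"ively one probe could ``explain away'' $\Theta(B \lg N)$ bits of epoch randomness and collapse the chronogram. The geometric ratio $\lg N$ between epochs is chosen precisely to defeat this: if $\max\{t_I,t_E,t_D\} = o((1/B)\lg_{\lg N} B)$, then the footprint \emph{written} during $E_i$ occupies only $o(|E_i|/B \cdot \lg_{\lg N} B)$ blocks, whereas the hard distribution injects $\Omega(|E_i| \lg N)$ fresh bits into those blocks, so decoding really does require $\Omega(|E_i|/B)$ $i$-useful probes from $Q$, averaging to $\Omega(1)$ per ExtractMin. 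The most delicate pieces will be (i) designing the distribution so that later DecreaseKeys do not prematurely overwrite the entropy planted in $E_i$, which constrains which keys may be targeted across rounds and why we need $N \geq M^{4+\eps}$ to give the adversary enough ``room,'' and (ii) arguing that the outputs of $Q$ \emph{simultaneously} force decoding of all $L$ epochs, so that the per-epoch probe lower bounds genuinely sum rather than share probes; this is usually delivered by a careful coupling to the underlying Set-Intersection lower bound and a union bound over epochs in the encoding.
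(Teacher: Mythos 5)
Your proposal takes a genuinely different route from the paper. The paper does \emph{not} run a chronogram or cell-sampling encoding argument at all. Instead it builds a recursive $(2+\beta)$-ary tree of operations in which Inserts at tree depth $h_v$ carry priority $h_v$, Deletes live at the leaves, and the ExtractMins placed at each internal node isolate exactly the set-intersection instance between that node's Inserts and the Deletes in one of its subtrees. It then reduces to a \emph{Two-Phase} communication game for set intersection and proves an information-complexity lower bound there (Lemma~\ref{lem_com_lower_si}). The entire engine of the lower bound is the asymmetry that simulating a probe costs $w$ bits to name the cell but $Bw$ bits to send its contents, which forces one player to be nearly silent in each phase; that asymmetry is what defeats the otherwise-too-cheap Las Vegas set-intersection protocols.

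I see a concrete gap in your approach, and it is exactly the one the paper's tree distribution was designed to avoid. In your flat ``rounds'' distribution, each round's $\Theta(B)$ ExtractMins remove precisely the $\Theta(B)$ keys that were just DecreaseKey'd, since everything else sits at the dummy top priority. Consequently the answer of a query batch $Q$ depends (up to tiny probability fluctuations) only on the most recent round of DecreaseKeys; the randomness injected in $E_i$ for $i\ge 2$ has already been extracted and carries essentially no influence on $Q$'s output. A chronogram encoding argument needs the correct answer of $Q$ to be sensitive to \emph{every} epoch $E_i$ conditioned on the others, and your distribution does not provide this: a priority queue legitimately ``forgets'' old rounds. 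This is qualitatively different from partial sums or dictionaries, where a query inherently aggregates the whole history. The paper solves the forgetting problem by giving Inserts at each tree level a \emph{distinct} priority, so that the set-intersection instance at level $h_v$ is only resolved by the ExtractMins at that level, and a single Delete key therefore participates in $\Theta(h)$ independent intersection instances stacked along a root-to-leaf path. That multi-priority hierarchy is the load-bearing design choice; without an analogue, the per-epoch contributions in your argument will not sum. Separately, even granting such a hierarchy, the encoding bookkeeping you sketch would still need the two-phase address-vs-contents asymmetry made explicit, because a naive chronogram lets one $Bw$-bit probe ``pay for'' an entire small epoch's entropy and lets the $O(Bw)$-bit answer pay for the rest, which is exactly what collapses the standard set-intersection reduction and is why the paper had to introduce the two-phase framework in the first place.
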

When $B \geq N^{\eps}$, this lower bound is $\Omega((1/B)\lg_{\lg N} N)$, whereas under the tall-cache assumption ($M \geq B^2$), one can design priority queues with only Insert and ExtractMin with an amortized cost of $O(1/B)$ per operation when $B \geq N^{\eps}$ (since $\lg_{M/B}N$ becomes $O(1)$). This is a $\lg N/\lg \lg N$ factor difference in performance!  Note that when we are proving lower bounds for non-comparison-based data structures, we assume keys and priorities are $O(\lg N)$-bit integers and that a disk block holds $Bw$ bits, where $w = \Theta(\lg n)$ is the word size. Similarly, we assume the main memory can hold $Mw$ bits. Our lower bound thus rules out an $O(|V|+\textrm{Sort}(|E|))$ I/O SSSP algorithm directly from a faster priority queue. Our lower bound also shows an interesting separation between internal memory and external memory models by showing that Las Vegas randomization and amortization is not enough to support DecreaseKeys at no extra cost in the external memory model.

%

\paragraph{Related Work.}
Recent work of Wei and Yi~\cite{wei:equiv} proves an external memory analog of Thorup's result on equivalence between integer sorting and priority queues. More specifically, they show that if sorting $N$ integers can be done in $O(S(N))$ I/Os, then there is an external memory priority queue supporting Insert and ExtractMin in amortized $O(S(N)/N)$ I/Os per operation, provided that $S(N) = \Omega(2^{\lg^* N})$ where $\lg^*N$ is the iterated logarithm. If $S(N) = o(2^{\lg^* N})$, their reduction gives a priority queue with Insert and ExtractMin in amortized $O((S/N)\lg^* N)$ I/Os. Note that we are dealing with integers and not just comparable keys and priorities, thus we have no sorting lower bound and it might be the case that $S(N)=O(N/B)$. In fact, Wei and Yi mention that their reduction might be useful for proving a lower bound for external memory sorting of integers via a lower bound for priority queues. Unfortunately, their reduction does not give efficient DecreaseKeys, and thus our result does not imply an integer sorting lower bound.

For the related dictionary problem, i.e. maintaining a set $S$ of integers under membership queries, a sequence of papers by Yi and Zhang~\cite{yi:member}, Verbin and Zhang~\cite{verbin:member} and finally Iacono and P{\v a}tra{\c s}cu~\cite{patrascu12buffer}, established tight lower bounds in external memory. Note that all these bounds are for maintaining integers and thus comparison based lower bounds do not apply. Moreover, the upper bounds for maintaining integers are far better than the comparison-based lower bounds say is possible for comparable keys and priorities, see~\cite{Brodal:member}.

We also mention that in some applications where one needs to perform DecreaseKey operations, one might know the previous priority of the (key,priority)-pair being updated. In that case, one can support all three operations in $O((1/B)\lg_{M/B}N)$ I/Os per operations using e.g. the reduction to sorting due to Wei and Yi~\cite{wei:equiv}.

Finally, for the application to SSSP mentioned early, there are more efficient solutions for sparse graphs. The best known algorithm for sparse graphs is due to Meyer and Zeh~\cite{Meyer2006} and solves SSSP in roughly $O(\sqrt{|E||V|/B} \log_2 |V|)$ I/Os (plus some I/Os for computing an MST). For $|E| > |V|B$, this is slower than the solution of Kumar and Schwabe.
\section{Two-Phase Communication Complexity}
\label{sec:introtwophase}
As mentioned in the introduction, our lower bound for priority queues hold also for non-comparison-based priority queues. More specifically, we think of the main memory as being able to store $M$ words of $w = \Theta(\lg N)$ bits each. Similarly, a block has a total of $Bw = \Theta(B \lg N)$ bits. We assume disk blocks have integer addresses in $[2^w]$. Priority queues can now perform arbitrary computations on the main memory free of charge and we only count the number of disk blocks read and written to. This corresponds exactly to the cell probe model~\cite{Yao1981:tables} with cell size $Bw = \Theta(B \lg N)$ bits, augmented with a main memory of $Mw = \Theta(M \lg N)$ bits that is free to examine. To use a terminology consistent with the literature on cell probe lower bounds, we henceforth refer to reading or writing a disk block as a cell probe (a cell probe thus corresponds to an I/O).

Our lower bound for priority queues follows by a reduction from communication complexity. Such an approach is not at all new in data structure lower bounds, see e.g.~\cite{MNSW:1998}. However for the priority queue problem, the standard reductions yield communication games with solutions that are too efficient to yield lower bounds for priority queues. We thus exploit that the protocols we obtain have a very special structure. To explain this further, we first introduce the communication game we reduce from and then discuss the known upper bounds that forces us to look at the special structure of our protocols:

Consider a communication game in which two players Alice and Bob both
receive a subset of a universe $[U]$. Letting $x$ denote Alice's set
and $y$ denote Bob's, the goal for the two players is that \emph{both players} learn
the intersection of the two sets $x$ and $y$, while minimizing
communication with Alice. This is the \emph{set intersection} problem in communication complexity. The special case of set intersection in which the players only need to decide whether $x \cap y = \emptyset$ is known as the \emph{set disjointness} problem and is amongst the most fundamental and well-studied problems in communication complexity. We assume the reader is familiar with basic terminology in communication complexity, i.e. deterministic protocols, private and public coin randomized protocols etc.

The Las Vegas, or zero-error, randomized communication complexity of set intersection has been shown to be $O(N)$ if Alice and Bob both receive sets of size $N$ in a universe $[U]$, i.e. independent of $U$~\cite{brody:setintersect}. This is too weak to give any useful lower bound for priority queues - we really need bounds higher than $N$. One could ask if using deterministic communication complexity would help, as the deterministic case has higher communication lower bounds. Unfortunately, our reduction from priority queues inherently yields Las Vegas randomized protocols even when then priority queue is deterministic (we use Yao's min-max principle). In addition to this issue, our application to priority queues have Alice's set somewhat smaller than Bob's, i.e. we are considering an asymmetric communication game, see e.g.~\cite{MNSW:1998}. Sadly there are also too efficient upper bounds for the asymmetric case to yield any lower bound for priority queues.  What saves us is that our reduction from set intersection to priority queues results in protocols of a very special form: First Bob speaks a lot while Alice is almost silent. They then switch roles and Alice speaks a lot while Bob is almost silent, i.e. protocols have two phases. They key property of external memory priority queues, which makes one of the players almost silent, is that if the data structure probes $k$ cells, then those cells have $kBw$ bits. Specifying the address of a cell is thus much cheaper than specifying its contents. This shows up in the communication protocol where one player essentially sends a factor $B$ less bits than the other. It turns out that such restrictions on the communication protocol rules out efficient solutions and we can in fact prove lower bounds that are higher than the known upper bounds without this two phase restriction. We define Two-Phase protocols and their cost in the following:

We say that a Las Vegas randomized communication protocol $P$ is a Two-Phase protocol if for every root-to-leaf path in the protocol tree, there is exactly one node which is marked the \emph{phase-transition} node. We think of all communication before the phase-transition node as belonging to phase one and all communication afterwards as belonging to phase two. We say that a Las Vegas randomized Two-Phase protocol $P$ has expected cost $C(P,x,y):=(a_1,b_1,a_2,b_2)$ on input $(x,y)$ ($x$ is given to Alice and $y$ to Bob) if: When running $P$ on $(x,y)$, the expected number of bits sent by Alice in phase one is $a_1$, the expected number of bits sent by Bob in phase one is $b_1$, the expected number of bits sent by Alice in phase two is $a_2$ and the expected number of bits sent by Bob in phase two is $b_2$. We use $C_{a_1}(P,x,y)$ as short for the expected communication of Alice in phase one. We define $C_{b_1}(P,x,y), C_{a_2}(P,x,y)$ and $C_{b_2}(P,x,y)$ similarly. If the inputs to Alice and Bob are drawn from a distribution $\mu$, we say that the cost of $P$ under distribution $\mu$ is $C(P,\mu):=(a_1,b_1,a_2,b_2)$ where $a_1 = \E_{(X,Y) \sim \mu}[C_{a_1}(P,X,Y)], b_1 = \E_{(X,Y) \sim \mu}[C_{b_1}(P,X,Y)], a_2 = \E_{(X,Y) \sim \mu}[C_{a_2}(P,X,Y)]$ and $b_2 = \E_{(X,Y) \sim \mu}[C_{b_2}(P,X,Y)]$.

Having defined Two-Phase protocol and their costs, we face one more challenge: Not any hard distribution for Two-Phase set intersection can be used in our reduction to priority queues. We are thus forced to prove lower bounds for a product distribution over Alice's and Bob's sets. Let $\SIa{U,k,l}$ denote the set intersection communication problem with sets in a universe $[U]$, such that Alice's set has size $k$ and Bob's set has size $l$. We present the hard distribution in the following and our main theorem lower bounding the communication complexity of set intersection under this distribution:

\paragraph{Hard distribution $\UINTa{U,k,l}$ for $\SIa{U,k,l}$}
Alice's input $X$ is sampled from all ${U\choose k}$ subsets of $[U]$ of size $k$ uniformly at random. Independently, Bob's input $Y$ is sampled from all ${U\choose l}$ subsets of $[U]$ of size $l$ uniformly at random. They must both learn $X\cap Y$.

\begin{lemma}\label{lem_com_lower_si}
No Las Vegas Two-Phase protocol $P$ for $\SIa{U,k,l}$ can have 
\[
	C(P,\UINTa{U,k,l})=(o(k),o(l),o(k\log r),o(k/r))
\]
for any $r>1$, $l\gg k$ and $U\gg l$.
\end{lemma}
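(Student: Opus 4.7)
My plan is to derive a contradiction by a combinatorial rectangle / counting argument after reducing to the distributional setting. First, I would apply Yao's minimax principle to replace the Las Vegas expected-cost condition by a deterministic Two-Phase protocol achieving the same expected costs under $\UINTa{U,k,l}$, and then apply Markov's inequality separately to each of the four cost coordinates so as to restrict attention to a constant-measure ``good event'' $\mathcal{G}$ on which all four actual costs are at most small constant multiples of their expectations, namely at most $(c_1 k, c_2 l, c_3 k\log r, c_4 k/r)$ with $c_i$ as small as desired.

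Next I analyze the phase-1 rectangle structure. The phase-1 transcript $\tau$ partitions inputs into combinatorial rectangles $\mathcal{X}_\tau \times \mathcal{Y}_\tau$, and since Alice and Bob send only $o(k)$ and $o(l)$ bits respectively, a standard counting argument gives, with constant probability, $|\mathcal{X}_\tau| \geq 2^{-o(k)}\binom{U}{k}$ and $|\mathcal{Y}_\tau| \geq 2^{-o(l)}\binom{U}{l}$; I call such $\tau$ \emph{typical} and condition on one. Correctness of the Las Vegas protocol further forces every phase-2 sub-rectangle $\mathcal{X}_{\tau\pi}\times \mathcal{Y}_{\tau\pi}$ to satisfy $X \cap Y = S_\pi$ for a single common $S_\pi$ at all $(X,Y)$ inside.

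The core of the argument exploits the asymmetry of phase 2. For any fixed $X \in \mathcal{X}_\tau$ the phase-2 protocol tree branches only at Bob's nodes (Alice's bits being determined by $X$ together with the history), so varying $Y \in \mathcal{Y}_\tau$ traces out at most $2^{b_2} = 2^{o(k/r)}$ distinct phase-2 transcripts, hence $|\{X\cap Y : Y\in\mathcal{Y}_\tau\}| \leq 2^{o(k/r)}$; symmetrically, for any fixed $Y$ the number of realized intersections is at most $2^{a_2} = 2^{o(k\log r)}$. The contradiction would come from lower-bounding one of these two quantities for a typical input: using that for each $S \subseteq X$ with $|S| \leq k$ the set $\{Y \in \binom{[U]}{l} : X \cap Y = S\}$ has cardinality $\binom{U-k}{l-|S|}$, a second-moment / density argument on $\mathcal{Y}_\tau$ should show that many such $S$ are realized, and analogously on Alice's side. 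Playing the two upper bounds from phase 2 against these symmetric realization lower bounds should then cover every value of $r > 1$.

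The main obstacle is ensuring the realization lower bound is uniformly strong over all $r>1$, and in particular in regimes where the expected intersection size $kl/U$ is small so that most $Y\in\mathcal{Y}_\tau$ produce the same (empty or near-empty) intersection with $X$. In that regime a direct distinct-intersection count undershoots, and a more delicate fooling-set style argument — constructing a large collection of pairs $(X_i, Y_j)$ inside the rectangle whose pairwise intersections $X_i \cap Y_j$ are pairwise incompatible and hence force the phase-2 tree to distinguish them — is likely needed to convert the density of $\mathcal{X}_\tau, \mathcal{Y}_\tau$ into a lower bound that beats $2^{o(k/r)}$ (or symmetrically $2^{o(k\log r)}$). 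Reconciling this fooling-set construction with the Two-Phase structure, so that the contradiction fires regardless of where $r$ lies in the tradeoff, is where I expect the bulk of the technical work to sit.
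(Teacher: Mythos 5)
Your plan takes a genuinely different route from the paper, and the gap you flag at the end is a real one, not just a loose end. The paper does not attempt a global rectangle/counting argument on the full set-intersection instance. Instead it runs a chain of information-complexity reductions: first it replaces $\UINTa{U,k,l}$ by a structured product distribution in which $[U]$ is partitioned into $k$ blocks and each block into $l/k$ buckets, with Alice holding exactly one element per block and Bob exactly one per bucket (Observation~\ref{obs_one_ball} and the surrounding embedding handle the passage from uniform to structured). It then applies a direct-sum step (Lemma~\ref{lem_com_lowerk}) to reduce the $k$-block instance to a \emph{single} block, paying a $1/k$ factor in information cost, which turns the problem into an indexed-equality instance with Alice holding one element. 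Lemma~\ref{lem_com_lower1} then fixes a Phase-I transcript on which inputs still have high conditional entropy and Phase-II information cost is small, and Lemma~\ref{lem_com_lower_eq} massages the resulting near-uniform marginals into genuinely uniform ones. Only at the very bottom (Lemma~\ref{lem_com_lower_eq_uni}), on a tiny equality instance over a universe of size $U/l$, does a rectangle argument of your flavor appear; there the monochromaticity forces $2^{H(X|\Pi)}+2^{H(Y|\Pi)}\leq W$, which plays Alice's $\tfrac12\log r$ against Bob's $\tfrac{1}{2r}$ exactly in the way your sketch tries to do globally.

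The reason your direct attack stalls is precisely the regime you identify: when $kl/U=o(1)$, the map $Y\mapsto X\cap Y$ is overwhelmingly constant (equal to $\emptyset$), so even a completely trivial Phase II realizes the correct answer on $1-o(1)$ of the measure, and a $2^{-o(l)}$-density rectangle $\mathcal{Y}_\tau$ can easily avoid every nonempty intersection with a typical $X$. Counting distinct realized intersections therefore gives no contradiction, and the fooling-set repair you gesture at has to manufacture many $(X_i,Y_j)$ pairs with pairwise incompatible intersections \emph{inside a product set} — but a product $\mathcal{X}_\tau\times\mathcal{Y}_\tau$ of the required density can still be intersection-trivial in this regime, so no such family exists in general. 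The direct-sum decomposition is exactly how the paper escapes this: after zooming into one block and one bucket, the universe is reduced to size $U/l$, and "empty vs.\ nonempty intersection" becomes "unequal vs.\ equal" on a universe where a uniformly random pair is equal with probability $l/U$, which is amplified by the $1/k$ information-cost savings; no second-moment density argument on the big rectangle is needed. A second, more technical issue with your sketch: the expected-cost-to-worst-case conversion via Markov has to survive two rounds of conditioning (on the good event $\mathcal{G}$, then on a typical $\tau$, then on a typical $X$), and the per-$X$ conditional expectation of Bob's Phase-II cost can be much larger than $b_2$; the paper's use of information cost, which averages gracefully under conditioning and concatenation, is what makes the analogous bookkeeping go through cleanly in Lemma~\ref{lem_com_lower1}.
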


Observe that the last bound (Alice's communication in the second phase) is $k \lg r$, even if Bob sends up to $o(l)$ bits in the first phase. If we did not have two phases, a lower bound saying that it is not possible for Bob to send $o(l)$ bits and Alice to send $o(k \lg r)$ bits is simply not true since for any $\eps>0$, there is a protocol in which Alice sends $k \lg (1/\eps)$ bits and Bob sends $\eps l$ bits. Thus it is absolutely crucial that we have two phases, in which Bob's longest message is sent in the first phase and Alice's longest message is sent in the second. We defer the proof of Lemma~\ref{lem_com_lower_si} to Section~\ref{sec:comm}. In the next section, we give our reduction to priority queues from Two-Phase protocols and derive our priority queue lower bound from Lemma~\ref{lem_com_lower_si}. 

For the observant reader, we make an interesting technical remark: The lower bound holds even for $r$ as large as $k$, even when $U$ is just slightly larger than $k$ and $l$. One could perhaps think that in this case, Alice could just send her set to Bob in the second phase for $O(k \lg(U/k))$ bits (and this is the only communication), which would then contradict the lower bound of $\Omega(k \lg r)$ if $\lg r = \omega(\lg(U/k))$. The crucial point is that such a protocol will not have Alice learn the intersection, so Bob has to send the intersection back. The expected size of the intersection is $k \ell /U$, so Bob's message would have to be that long. But this is already $\Omega(k/r)$ unless $U \geq \ell r \geq k r$. This contradicts $\lg r$ being $\omega(\lg(U/k))$.
\section{Two-Phase Set Intersection to Priority Queues}
Our goal is to prove a lower bound for external memory priority queues support Insert, ExtractMin and DecreaseKey. We actually prove a slightly stronger lower bound. More specifically, we prove lower bounds for priority queues that support Insert, ExtractMin and Delete. The Delete operation takes as input a key, and must remove the corresponding (key,priority)-pair from the priority queue. A priority queue supporting DecreaseKey rather than Delete can support Delete as follows: First do a DecreaseKey, setting the priority of the deleted key to $-\infty$ (or some small enough number). Then do an ExtractMin. We thus focus on proving lower bounds for priority queues with Insert, ExtractMin and Delete.

We let $B$ denote the disk block size in number of words of $w = \Theta(\lg N)$ bits and let $M$ denote the main memory size in number of words. To keep the math simple, we prove the lower bound only in the regime $N \geq M^{4+\eps}$ for some constant $\eps>0$. Our goal is to prove

\begin{theorem}
\label{thm:mainDelete}
Any Las Vegas randomized external memory priority queue with expected amortized $t_I$ probes per Insert, expected amortized $t_D$ probes per Delete and expected amortized $t_E$ probes per ExtractMin must have $\max\{t_I,t_D,t_E\} = \Omega((1/B) \lg_{\lg N} B)$ when $N \geq M^{4+\eps}$.
\end{theorem}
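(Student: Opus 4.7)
My plan is to derive Theorem~\ref{thm:mainDelete} by a reduction from the Two-Phase set intersection problem $\SIa{U,k,\ell}$ to external memory priority queues and then invoke Lemma~\ref{lem_com_lower_si}. Assume for contradiction an external memory priority queue with amortized cost $t=o((1/B)\log_{\log N} B)$ per Insert, Delete, and ExtractMin. I would set $r=\log N$ (so $\log r=\log\log N$ and $\log_{\log N}B=\log B/\log r$) and pick $U,k,\ell$ with $\ell\gg k$ and $U\gg\ell$ so that the set intersection instance fits naturally into a sequence of $\Theta(N)$ PQ operations.

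The first step is to construct the reduction itself: a fixed sequence of Insert, Delete, and ExtractMin operations, parameterized by $(X,Y)\sim \UINTa{U,k,\ell}$, whose execution produces output from which $X\cap Y$ can be recovered. The sequence should place all operations determined by $Y$ first and all operations determined by $X$ second, so that the simulating protocol is naturally Two-Phase. A natural scheme is: Bob inserts each $y\in Y$ with key and priority $y$; then for each $x\in X$, Alice performs a short combination of operations (say, an Insert of $x$ at a chosen priority followed by an ExtractMin, or a Delete targeting $x$) whose observable output differs exactly when $x\in Y$.

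Next I would simulate this PQ execution by a Two-Phase communication protocol. In phase one, Bob runs his inserts locally, then transmits to Alice the final main-memory state ($Mw$ bits) together with a compact description of the addresses (not the contents) of cells he has modified. In phase two, Alice replays her part of the sequence: for each probe she performs, she either knows the cell contents already (initial state, or her own previous writes) or she queries Bob by sending the address, receiving a $Bw$-bit reply. At the end, Alice reads off $X\cap Y$ from the outputs she has observed.

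The main obstacle is the simultaneous satisfaction of all four thresholds in Lemma~\ref{lem_com_lower_si}: Alice must send $o(k)$ bits in phase one (trivial in the above split), Bob $o(\ell)$ in phase one, Alice $o(k\log r)=o(k\log\log N)$ in phase two, and Bob $o(k/r)=o(k/\log N)$ in phase two. The phase-one bound for Bob forces a sparse encoding of his address metadata so that $Mw$ plus the digest fits in $o(\ell)$ bits; the regime $N\geq M^{4+\eps}$ is what makes $Mw$ affordable against $\ell$. The phase-two bound for Bob is the most delicate: at $Bw$ bits per content reply, only $o(k/(rBw))$ of Alice's probes may touch Bob-modified cells, so a careful argument, exploiting that the expected intersection size $k\ell/U$ is tiny under $\UINTa{U,k,\ell}$ and that the PQ's access pattern is largely local to each operation, is needed to show that operations on $x\notin Y$ almost never hit Bob-modified cells. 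Matching Alice's phase-two address volume against the $o(k\log r)$ threshold in parallel, and plugging in $t=o((1/B)\log_{\log N}B)$, will produce a protocol strictly cheaper than Lemma~\ref{lem_com_lower_si} allows, yielding the sought contradiction and proving the theorem.
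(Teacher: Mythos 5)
Your proposal captures the right high-level idea---simulate the priority queue as a Two-Phase protocol where addresses are cheap to send ($w$ bits) and contents are expensive ($Bw$ bits)---but it misses the construction that makes the numbers work, and the phase structure you describe is not the one the paper needs.

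The central missing ingredient is the hierarchical hard distribution. The paper does not use a single flat ``Bob inserts $Y$, Alice queries $X$'' instance. Instead it builds a $(2+\beta)$-ary tree $\Tree$ of height $h=\Theta(\lg N/\lg\lg N)$ whose nodes carry insert-leaves, delete-leaves, and extract-min-leaves, chosen so that \emph{every} internal node $v$ encodes a set intersection instance between the inserts in $c_1(v)$ and the deletes in its middle subtrees; each deleted key therefore participates in $\Theta(h)$ such instances, one per ancestor. An averaging argument (Lemma~\ref{lem:embednode}) over nodes of $\Tree$ then produces a single node $v$ and child index $k$ where the cross-subtree probes $|\Source(v,k)|+|\Dest(v,k)|$ are $O(\max\{t_I,t_D,t_E\}M\beta^{h_v-1})$. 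It is exactly this averaged, tree-localized probe bound that plugs into the four communication thresholds. With a single-level reduction, you have no analogue of this averaging, and the amortized cost $t$ simply does not factor through into a bound on Alice's or Bob's address-sending volume the way you need. Concretely, you cannot make Bob's phase-two budget $o(k/r)$ work: nothing in a flat construction controls how many of the later probes land in cells written during Alice's segment.

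The phase structure is also off. In the paper, phase one has Alice simulating the delete subtree $c_k(v)$: she is the ``asker'' (sending $O(w)$-bit addresses, $a_1$ small) while Bob is the ``replier'' (sending $O(Bw)$-bit contents, $b_1$ allowed to be up to $o(\ell)$). In phase two, Bob simulates $c_{k+1}(v),\dots,c_{2+\beta}(v)$ --- in particular the extract-mins that actually reveal the intersection --- so Bob is now the asker ($b_2$ tiny) and Alice the replier ($a_2$ large, up to $o(k\log r)$). Your proposal treats Bob's initial address/memory dump as ``phase one'' and has Alice asking Bob for contents in ``phase two.'' That makes Bob the expensive replier in phase two, which is precisely where his budget is $o(k/r)$ and hence where he cannot afford even a single $Bw$-bit content reply for reasonable parameters. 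You also never give Bob a way to learn $X\cap Y$, but the lemma requires \emph{both} players to learn it; in the paper, Bob learns it from the extract-min outputs and transmits it to Alice for $O(1)$ expected bits because the universe is $N^4$ and so $\E[|X\cap Y|]=O(1/N^2)$.

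Finally, your parameter choice $r=\log N$ does not produce the $\lg_{\lg N}B$ target. In the paper's calculation, Alice's phase-two volume ends up $o(|X|\lg B)$ and Bob's phase-two volume ends up $o(|Y|/\sqrt{B})$, so the lemma is applied with $r=B^{\Theta(1)}$ (roughly $\sqrt{B}$) so that $\lg r=\Theta(\lg B)$; the $\lg\lg N$ in the denominator of the theorem comes from $h=\Theta(\lg N/\lg\lg N)$ via the tree height, not from a choice of $r$.
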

By the arguments above, this theorem also implies Theorem~\ref{thm:main}.
To prove the theorem, we start by describing our hard input distribution. In our hard distribution, we can have Deletes being executed for keys that have not been inserted in the data structure. For the remainder of our proof, we simply assume that a Delete operation for a key that is not present, has no effect. We show in Section~\ref{sec:avoidspurios} how we can modify the proof to avoid Deletes of keys not present in the data structure. Note also that our proof in the following only needs $N \geq M^{1+\eps}$. Altering the hard distribution such that we have no Deletes of already inserted keys needs the extra requirement $N \geq M^{4+\eps}$ as stated in the theorem.

\paragraph{Hard Distribution.}
Let $\beta \geq 2$ be an integer parameter. For an integer height $h \geq 8$ and memory size $M$, such that $N \geq M^{1+\eps}$ for some constant $\eps>0$, consider a $(2+\beta)$-ary tree $\Tree$ having sequences of updates at the leaves. We assume for convenience that $h$ is some multiple of $4$. Such a tree $\Tree$ defines a sequence of updates by forming the concatenated sequence of the updates in the leaves of $\Tree$ as encountered during a pre-order traversal of $\Tree$. 

The topology of $\Tree$ is defined recursively in the following way: Given a height parameter $h_v$, with $h_r = h$ for the root node $r$, a node $v \in \Tree$ is defined as follows: If $h_v = 0$, $v$ is simply a leaf. We call such a leaf a \emph{delete-leaf}. Otherwise, $v$ has $2+\beta$ children $c_1(v),c_2(v),\dots,c_{2+\beta}(v)$, in that order (from left to right). Children $c_1(v)$ and $c_{2+\beta}(v)$ are leaves, whereas $c_2(v),\dots,c_{1+\beta}(v)$ are recursively constructed each with height parameter $h_v-1$. We call $c_1(v)$ an \emph{insert-leaf} and $c_{2+\beta}(v)$ an \emph{extract-min-leaf}.

Having defined the topology of $\Tree$, we add updates to the delete-leaves of $\Tree$ as follows: We place $Mh$ Delete operations in each of the $\beta^h$ delete-leaves. The keys assigned to these Deletes are chosen by picking a uniform random set of $Mh\beta^h$ keys from $[(M h \beta^h)^4]$ and assigning them in uniform random order to the Delete operations.

For each internal node $v \in \Tree$ of height $h_v$, we place $M\beta^{h_v}$ Inserts in the insert-leaf $c_1(v)$. These Inserts all have the priority $h_v$. Summed over all insert-leaves, we have a total of $Mh\beta^h$ Inserts. We now assign keys to the Inserts by drawing a uniform random set of $Mh\beta^h$ keys from $[(M h \beta^h)^4]$ and assigning these keys in uniform random order to the Inserts.

Finally, we need to populate the extract-min leaves. For an internal node $v \in \Tree$ of height $h_v$, we place $M\beta^{h_v}$ ExtractMins in the extract-min leaf $c_{2+\beta}(v)$. These ExtractMins returns a sequence of key-priority pairs $(k_1,p_1),\dots,(k_{M \beta^{h_v}}, p_{M \beta^{h_v}})$, where this sequence is completely determined from previous operations in $\Tree$. We now re-insert all these pairs $(k_i,p_i)$ for which $p_i \neq h_v$.

This completes the description of the hard distribution. We refer to it as distribution $\Dist(\beta, h, M)$. We first state a simple property of the distribution:

\begin{property}
\label{prop:numups}
Distribution $\Dist(\beta ,h, M)$ is over sequences of $M h \beta^h$ Deletes, at most $2Mh \beta^h$ Inserts and $Mh\beta^h$ ExtractMins. 
\end{property}

The much more interesting property of $\Dist(\beta, h, M)$ is the following: Consider a node $v$ in $\Tree$ and define the set $Y_v$ consisting of all keys inserted in $c_1(v)$. The keys in $Y_v$ are all inserted with a priority of $h_v$ in $c_1(v)$. Also let $X_v$ denote the set of keys deleted in some delete-leaf in the subtrees rooted at $c_2(v),\dots,c_{1+\beta}(v)$. Precisely those keys in $Y_v$ that are not deleted by the Deletes corresponding to $X_v$, will still have priority $h_v$ just before processing $c_{2+\beta}(v)$. Moreover, all Inserts performed in the insert-leaves of the subtrees rooted at $v$'s children are with lower priorities. But these are all extracted in the corresponding extract-min leaves. Thus when reaching $c_{2+\beta}(v)$, the keys in $Y_v$ that have not been deleted are the keys in the priority queue with lowest priority. Therefore, the key-priority pairs being extracted by the ExtractMins in $c_{2+\beta}(v)$ uniquely determines $Y_v \setminus X_v$. Combined with $Y_v$,  we therefore also learn $Y_v \cap X_v$.

What this property suggests, is that the priority queue has to solve a set intersection problem between $Y_v$ and $X_v$ in order to correctly answer the ExtractMins in $c_{2+\beta}(v)$. But the keys in the delete-leaves are involved in $h$ such set intersection problems, and thus intuitively they will have to be read $h$ times. This suggests roughly the lower bound we aim for, namely that each delete operation causes $h/B$ I/Os. Of course this whole argument is complicated by the fact that if the priority queues figures out what the priorities are for the keys deleted by $X_v$, then the priority queue can ignore all elements in $X_v$ that do not correspond to keys of priority $h_v$ when solving the set intersection problem on $Y_v$ and $X_v$. We thus have to carefully argue that if the priority queue is very fast, then intuitively it cannot learn these priorities. Another technical issue we need to tackle, is that we have to consider set intersection between $Y_v$ and the set of Deletes in only one child's subtree. The formal details are given later.

We are now ready to set the stage for our reduction from the Two-Phase communication complexity of set intersection. The first step in the reduction is to find a suitable node $v$ in $\Tree$ into which to ``embed'' the set intersection communication problem. We show how we do this in the following subsection.

\subsection{Finding a Node for Embedding}
In the following, we show that for any efficient Las Vegas randomized external memory priority queue $\D^*$, if we run an update sequence drawn from $\Dist(\beta, h, M)$ on $\D^*$, then there is a node $v$ in $\Tree$, such that when $\D^*$ processes the updates in the subtree rooted $v$, there is only very few memory cells that are being updated during the operations corresponding to $c_1(v)$ and then read again during the operations in the subtree rooted at one of the other children $c_k(v)$. The precise details become apparent in the following.

\paragraph{Fixing Random Coins of Data Structures.}
Assume $\D^*$ is a Las Vegas randomized external memory priority queue with main memory size $M$, block size $B$, word size $w=\Theta(\lg N)$, having expected amortized $t_I$ probes per Insert, expected amortized $t_D$ probes per Delete and expected amortized $t_E$ probes per ExtractMin. By Property~\ref{prop:numups}, $\D^*$ processes an entire sequence of updates drawn from $\Dist(\beta,h,M)$ with an expected $t \leq 2Mh\beta^ht_I + Mh\beta^ht_D + Mh\beta^ht_E$ probes in total. Defining $N=4Mh\beta^h$ as the maximum total length of an update sequence drawn from $\Dist(\beta,h,M)$, we see that $t = O((t_I + t_D + t_E)N)$. By fixing the random coins of $\D^*$, we get a deterministic external memory priority $\D$ with main memory size $M$, block size $B$, word size $w$, using expected $t$ probes to process an update sequence drawn from $\Dist(\beta,h,M)$. We thus prove lower bounds on $t$ for such deterministic priority queues $\D$.

\paragraph{Assigning Probes to Nodes.}
Let $I \sim \Dist(\beta,h,M)$ be a random sequence of updates drawn from $\Dist(\beta,h,M)$. We use $\Tree$ to denote the $(2+\beta)$-ary tree corresponding to $I$. Given a deterministic external memory priority queue $\D$, let $t_\D(I)$ be the total number of probes performed by $\D$ when processing $I$. Our goal is to lower bound $\E[t_\D(I)]$. To do this, consider processing $I$ on $\D$. For each probe $\D$ performs, let $\ell$ be the leaf of $\Tree$ in which the currently processed update resides and let $a \in [2^w]$ be the address of the cell probed. If this is the first time the block of address $a$ is probed, we say the probe is associated to the leaf $\ell$. Otherwise, let $z$ be the leaf in which $a$ was last probed. We then associate the probe to the lowest common ancestor of $z$ and $\ell$. For a node $v \in \Tree$, define $\Probes(v)$ as the set of probes associated to $v$ after processing the entire update sequence $I$.

Since probes are associated to only one node each, it follows by linearity of expectation that there must exist a height $h^* \in \{h/2,\dots, h \}$ such that $$\E\left[\sum_{v \in \Tree : v \textrm{ is internal}, h_v=h^* } |\Probes(v)|\right] = O(t/h).$$
Consider the $\beta^{h-h^*}$ internal nodes at depth $h^*$ in $\Tree$. Define for each node $v$ of depth $h^*$ the quantity $\mathcal{C}(v)$ which gives the number of probes performed by all $O(\beta^{h^*}Mh)$ updates in the subtree rooted at $v$.

By Markov's inequality, linearity of expectation and a union bound, it follows that there is at least one of these nodes $v^* \in T$ with
$$
\E\left[|\Probes(v^*)|\right] = O\left(\frac{t}{h\beta^{h-h^*}}\right).
$$
and at the same time
$$
\E[\mathcal{C}(v^*)] = O(t/\beta^{h-h^*}).
$$
For a probe $p \in \Probes(v^*)$, let $i(p)$ and $j(p)$, with $i(p) < j(p)$, be the indices of the children $c_{i(p)}(v^*)$ and $c_{j(p)}(v^*)$ such that the probe $p$ took place in the subtree rooted at $c_{j(p)}(v^*)$ and was to a cell last probed in the subtree rooted at $c_{i(p)}(v^*)$. We define for each child $c_k(v^*)$ the sets of probes $\Source(v^*,k)$ and $\Dest(v^*,k)$, where $\Source(v^*,k)$ contains all probes $p \in \Probes(v^*)$ with $i(p) = k$ and $\Dest(v^*,k)$ contains all probes $p \in \Probes(v^*)$ with $j(p)=k$. Since each probe $p \in \Probes(v^*)$ is stored only in $\Source(v^*,i(p))$ and $\Dest(v^*,j(p))$, it follows that there must exist an index $k^* \in \{2,\dots,\beta+1\}$ such that
$$
\E\left[|\Source(v^*,k^*)| + |\Dest(v^*,k^*)|\right] = O\left( \frac{t}{h\beta^{h-h^*+1}}\right).
$$
Using that $t = O((t_I+t_D+t_E)N) = O(\max\{t_I,t_D,t_E\}M h \beta^h)$, the following lemma summarizes the findings above:
\begin{lemma}
\label{lem:embednode}
If $\D^*$ is a Las Vegas randomized external memory priority queue with main memory size $M$, block size $B$, word size $w$, expected amortized $t_I$ probes per Insert, expected amortized $t_D$ probes per Delete and expected amortized $t_E$ probes per ExtractMin, then there exists a deterministic external memory priority queue $\D$ with main memory size $M$, block size $B$, word size $w = \Theta(\lg N)$, such that for $I \sim \Dist(\beta, h, M)$, there is a node $v \in T$ with $h_v \in \{h/2,\dots,h\}$ and a child index $k \in \{2,\dots,\beta+1\}$ such that 
$$
\E\left[|\Source(v,k)| + |\Dest(v,k)|\right]  = O\left(\max\{t_I,t_D,t_E\}M \beta^{h_v-1}\right)
$$
and at the same time
$$
\E[\mathcal{C}(v)] = O(\max\{t_I,t_D,t_E\}Mh \beta^{h_v}).
$$
\end{lemma}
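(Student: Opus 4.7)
My plan is to combine a standard application of Yao's minimax principle with three successive averaging/pigeonhole arguments, essentially formalizing the discussion in the paragraphs preceding the lemma. First, since $\D^*$ is Las Vegas and its expected amortized cost bounds hold against every fixed input sequence, I can swap the two expectations: $\E_r \E_I[t_{\D^*(r)}(I)] = \E_I \E_r[t_{\D^*(r)}(I)] = O(\tau \cdot Mh\beta^h)$, writing $\tau = \max\{t_I,t_D,t_E\}$ and using Property~\ref{prop:numups}. The probabilistic method then supplies coins $r^*$ for which $\D := \D^*(r^*)$ is a deterministic priority queue satisfying $\E_I[t_\D(I)] = O(\tau Mh\beta^h) =: t$.

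Next, I use the LCA-based probe assignment already defined, which partitions the probes of $\D$ on $I$ among the nodes of $\Tree$, so $\sum_{v \in \Tree} |\Probes(v)| = t_\D(I)$. I group nodes by height and pigeonhole over the $h/2 + 1$ candidate heights in $\{h/2,\ldots,h\}$ to find $h^*$ with $\E[\sum_{v : h_v = h^*} |\Probes(v)|] = O(t/h)$. Since the $\beta^{h-h^*}$ internal nodes at height $h^*$ have pairwise disjoint subtrees, I additionally have $\sum_{v : h_v = h^*} \mathcal{C}(v) \le t_\D(I)$, hence expected total $O(t)$. Applying Markov's inequality to each of these two sums and union-bounding, a constant fraction of the $\beta^{h-h^*}$ nodes simultaneously satisfy $\E[|\Probes(v)|] = O(t/(h\beta^{h-h^*})) = O(\tau M \beta^{h^*})$ and $\E[\mathcal{C}(v)] = O(t/\beta^{h-h^*}) = O(\tau Mh\beta^{h^*})$; I fix any such $v^* =: v$. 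Finally, every probe $p \in \Probes(v^*)$ contributes to exactly one of $\Source(v^*,k)$ and one of $\Dest(v^*,k)$, so $\sum_k (|\Source(v^*,k)| + |\Dest(v^*,k)|) = 2|\Probes(v^*)|$; an averaging over the $\beta$ middle children $k \in \{2,\ldots,\beta+1\}$ yields some $k^*$ with $\E[|\Source(v^*,k^*)| + |\Dest(v^*,k^*)|] = O(\tau M\beta^{h^*-1})$. Substituting $h^* = h_v$ gives exactly the two displayed bounds.

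The only place requiring even mild care is the conjunction step: extracting a \emph{single} node $v^*$ that is simultaneously good for $|\Probes(\cdot)|$ and $\mathcal{C}(\cdot)$ is handled by applying Markov with a large-enough constant to each and union-bounding, which costs only an $O(1)$ loss in both bounds. The restriction of $h^*$ to $\{h/2,\ldots,h\}$ is also not an obstacle here (the $h/2+1$ heights give only a constant-factor slack via pigeonhole), but it is essential for the downstream reduction to two-phase set intersection, where $v$ must be deep enough that $X_v$ is large yet shallow enough that $c_1(v)$'s inserts dominate. With these arithmetic substitutions completed, the conclusion of the lemma follows directly.
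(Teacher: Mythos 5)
Your proposal is correct and follows essentially the same route as the paper: fix the random coins to get a deterministic $\D$ with the right expected total probe count, pigeonhole over the $h/2+1$ heights in $\{h/2,\ldots,h\}$, then Markov/averaging with a union bound over the $\beta^{h-h^*}$ nodes at that height to get a single $v^*$ that is simultaneously good for $\E[|\Probes(\cdot)|]$ and $\E[\mathcal{C}(\cdot)]$, and finally average the $\Source/\Dest$ pairs over the $\beta$ middle children to extract $k^*$. The only expository difference is that you make the expectation swap and the disjointness of same-height subtrees explicit, which the paper leaves implicit.
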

 The remaining part of the argument is to prove a lower bound on $\E\left[|\Source(v,h)| + |\Dest(v,h)|\right]$ and thereby a lower bound on $\max\{t_I,t_D,t_E\}$. We proceed in the following section to establish this lower bound by a reduction from Two-Phase communication complexity.

\subsection{Embedding the Communication Game}
With Lemma~\ref{lem:embednode} established, we are ready to show our reduction from Two-Phase communication complexity. 
Let $D^*$ be a Las Vegas randomized external memory priority queue as in Lemma~\ref{lem:embednode}. Then by Lemma~\ref{lem:embednode}, there is a deterministic priority queue $\D$ for which we can find a node $v \in T$ with $h_v \in \{h/2,\dots,h\}$ and a child index $k \in \{2,\dots,\beta+1\}$ such that
$$
\E\left[|\Source(v,k)| + |\Dest(v,k)|\right]  = O\left(\max\{t_I,t_D,t_E\}M \beta^{h_v-1}\right)
$$
and
$$
\E[\mathcal{C}(v)] = O(\max\{t_I,t_D,t_E\}Mh \beta^{h_v}).
$$
We will use this deterministic priority queue $\D$ and node $v$ to obtain a Two-Phase protocol for the set intersection problem in which Alice receives a set $X$ of size $M h \beta^{h_v-1}$ and Bob a set $Y$ of size $M\beta^{h_v}$. Assume these sets are drawn from the distribution $\UINTa{(M h \beta^h)^4,Mh\beta^{h_v-1},M\beta^{h_v}}$, i.e. the sets are uniform random and independent of each other from the universe $[(M h \beta^h)^4] = [N^4]$. Alice and Bob will now form an update sequence to the priority queue such that the distribution of that update sequence is exactly $\Dist(\beta, h, M)$. Their high-level approach is as follows: Alice and Bob will use randomness to populate all leaves of $\Tree$ up to (and including) reaching the node $c_{2+\beta}(v)$ in a pre-order traversal of $\Tree$. The updates in $c_1(v)$ will be known only to Bob and will represent his set $Y$. The updates in the delete-leaves of $c_k(v)$ will be known only to Alice and will represent her set $X$. The updates in all other populated nodes will be known to both of them. We give the details of how these nodes are populated in the following:

\paragraph{Populating $\Tree$.}
First we focus on populating all delete-leaves in $\Tree$ encountered up to reaching node $c_{2+\beta}(v)$ in a pre-order traversal of $\Tree$, except those in the subtree $c_k(v)$. This is done as follows: Let $L \leq M h \beta^h$ denote the number of Delete operations that we need to choose keys for. From public randomness, Alice and Bob sample a uniform random set of $L$ keys from $[(M h \beta^h)^4]$. Alice privately checks whether the sampled set intersects $X$. If not, she sends a 1-bit to Bob. If they do intersect, she first sends a 0-bit to Bob. She then privately samples a new set of $L$ keys that is uniform random amongst all sets disjoint from $X$ and sends that set to Bob. From public randomness, they assign the sampled keys to the Delete operations in a uniform random order. 

Next, they populate all insert-leaves encountered up to $c_{2+\beta}(v)$ in a pre-order traversal of $\Tree$, except $c_1(v)$. Let $K \leq Mh \beta^h$ denote the number of Insert operations they need to choose keys for. From public randomness, they sample a uniform random set of $K$ keys from $[(M h \beta^h)^4]$. Bob privately checks whether the sampled set intersects $Y$. If not, Bob sends a 1-bit to Alice. If they do intersect, Bob sends a 0-bit to Alice. He then privately samples a new set of $K$ keys that is uniform random amongst all sets disjoint from $Y$ and sends that set to Alice. From public randomness, they assign the sampled keys to the Insert operations in a uniform random order.

Finally, Bob privately populates $c_1(v)$ by assigning the keys in his set $Y$ in a uniform random order to the Insert operations in $c_1(v)$. Alice privately populates the delete-leaves in the subtree rooted at $c_k(v)$ by assigning the keys in her set $X$ in a uniform random order to the Delete operations.

Observe that by the rejection sampling, the distribution of the Insert and Delete operations in the insert-leaves and delete-leaves of $\Tree$ correspond exactly to distribution $\Dist(\beta,h,M)$ when $X$ and $Y$ are drawn from $\UINTa{(M h \beta^h)^4,Mh\beta^{h_v-1},M\beta^{h_v}}$. Also observe that after having chosen these, the operations to be performed in each extract-min leaf encountered before $v$ in a pre-order traversal of $\Tree$ are completely fixed under distribution $\Dist(\beta,h,M)$. Thus we assume Alice and Bob also both know these operations. Furthermore, Bob can privately determine all operations to be performed in the extract-min leaves in the subtrees rooted at $c_2(v),\dots,c_{k-1}(v)$ if following distribution $\Dist(\beta, h, M)$. We are ready to describe the communication protocol.

\paragraph{The Two-Phase Communication Protocol.}
In the first phase, Alice and Bob first send the bits needed by the rejection sampling procedure described above. Next, from the sampling above, Alice and Bob both know all updates of all leaves in the sampled $\Tree$ encountered before $v$ in a pre-order traversal of $\Tree$. They both privately run the deterministic priority queue $\D$ on this update sequence. Next, Bob knows all updates in $c_1(v),\dots,c_{k-1}(v)$. He privately feeds these updates to the priority queue $\D$ and remembers the set of addresses $A$ of the cells changed by these updates. He sends the set of such addresses to Alice for $O(|A|w)$ bits. He also sends the memory contents of $\D$ for another $O(Mw)$ bits. Alice now starts running the updates of the subtree $c_k(v)$ (she knows all these operations, except for the Inserts in the extract-min leaves). When processing these updates, whenever they want to read a cell for the first time during the updates of $c_k(v)$, Alice checks if the cell's address is in $A$. If not, she knows the correct contents of the cell from her own simulation of the data structure up to just before $v$. Otherwise, she asks Bob for the contents of the block (costing $O(w)$ bits to specify the address) as it is just before processing $c_k(v)$ and Bob replies with its $O(B w)$ bit contents. From that point on, Alice will remember all changes to that cell. Observe that by this procedure, when reaching an ExtractMin leaf, she can first simulate the query algorithm of $\D$ on all the ExtractMin operations. From the answers to these, she knows which Inserts to perform in the extract-min leaf in order to follow distribution $\Dist(\beta, h, M)$, so she runs these Inserts as well. Once Alice has finished running all updates in the subtree $c_k(v)$, phase one ends. In phase two, Alice starts by collecting the set of addresses $Z$ of all cells changed by the updates in $c_k(v)$'s subtree. She sends this set of addresses to Bob for $O(|Z|w)$ bits. She also sends the memory contents of $\D$ after processing all updates in $c_k(v)$'s subtree. This costs $O(Mw)$ vits. Bob now starts running all updates in $c_{k+1}(v),\dots,c_{2+\beta}(v)$ on $\D$. Each time one of these updates reads a cell with an address in $Z$ for the first time, Bob asks Alice for the contents of the block and Alice replies. This costs $O(w)$ bits from Bob and $O(B w)$ bits from Alice. If a block not with an address in $Z$ is read, Bob knows the contents himself. When Bob runs the ExtractMin operations in $c_{2+\beta}(v)$, he remembers the returned key-priority pairs $(k_1,p_1),\dots,(k_{M h \beta^{k_v}}, p_{M h \beta^{k_v}})$. It follows from our distribution $\Dist(\beta,h,M)$ that $Y \cap X$ is precisely the set of keys that were inserted during $c_1(v)$, not deleted in any of the delete-leaves in the subtrees rooted at $c_2(v),\dots,c_{k-1}(v),c_{k+1}(v),\dots,c_{1+\beta}(v)$ and yet was not returned with priority $h_v$ by an ExtractMin operation in $c_{2+\beta}(v)$. But Bob knows all these things, so he can compute $Y \cap X$. He then sends this set to Alice for $O(|X \cap Y|\lg N)$ bits, finishing the communication protocol.

\paragraph{Analysis.}
What remains is to analyse the efficiency of the Two-Phase protocol above. First observe that $L$ and $K$ are both at most $N=Mh\beta^h$ and the same for $|X|$ and $|Y|$. Since the universe has size $N^4$, the probability of having to reject the sampled sets are less than $1/N^2$. So the expected communication due to rejection sampling is $O(1)$ bits for both players. Now since the distribution of the update sequence fed to $\D$ is $\Dist(\beta, h, M)$, it follows from Lemma~\ref{lem:embednode} that $\E[|A|w] \leq \E[\mathcal{C}(v) w] = O(\max\{t_I,t_D,t_E\}Mhw \beta^{h_v}) = O(\max\{t_I,t_D,t_E\}hw|Y|)$. Since $h_v \geq h/2$, we see that $O(Mw) = o(|Y|)$. Next, observe that the number of cells for which Alice needs to ask Bob for the contents in phase one is $|\Dest(v,k)|$. Thus the expected number of bits sent by Alice to ask for cell contents is $O(\max\{t_I,t_D,t_E\}M \beta^{h_v-1} w) =O(\max\{t_I,t_D,t_E\}M \beta^{h_v-1} w) = O(\max\{t_I,t_D,t_E\}|X|w/h)$. Bob's replies cost an expected 
$$
O(\max\{t_I,t_D,t_E\}M \beta^{h_v-1} B w) = O(\max\{t_I,t_D,t_E\}|Y|Bw/\beta)
$$
bits.

For phase two, sending the addresses of cells in $Z$ cost an expected 
$$
\E[|Z|w]\leq \E[\mathcal{C}(v) w] = O(\max\{t_I,t_D,t_E\}Mhw \beta^{h_v}) = O(\max\{t_I,t_D,t_E\}\beta w |X|)$$
bits. Since $h_v \geq h/2$, we have $O(Mw)=o(|X|)$. Next, the number of cells for which Bob needs to ask Alice for the contents is $|\Source(v,k)|$. Thus the expected number of bits sent by Bob to ask for cell contents is $O(\max\{t_I,t_D,t_E\}M \beta^{h_v-1} w) = O(\max\{t_I,t_D,t_E\}|Y|w/\beta)$. The replies of Alice costs expected $O(\max\{t_I,t_D,t_E\}MB \beta^{h_v-1} w) = O(\max\{t_I,t_D,t_E\}Bw|X|/h)$ bits. Finally, the expected size of $|X \cap Y|$ is $O(1/N^2)$ bits, thus the last message from Bob to Alice (sending $X \cap Y$) costs and expected $O(1)$ bits.

\paragraph{Deriving the Lower Bound.}
From the bounds on the communication, we can finally derive our lower bound. First observe that the lower bound in Theorem~\ref{thm:mainDelete} is trivial for $B = \lg^cN$ for any constant $c>0$ (it only claims $\Omega(1/B)$). Thus we focus on the case of $B = \lg^{\omega(1)} N$. We set $\beta = \lg^{c'} N$ for a sufficiently large constant $c'>0$. Since we assume $N \geq M^{1+\eps}$ for some constant $\eps>0$, we get that with this choice of $\beta$, we have $h = \Theta(\lg N/\lg \lg N)$. Now assume for contradiction that $\max\{t_I,t_D,t_E\} = o((1/B)\lg_{\lg N} B)$. The analysis above now shows that the priority queue $\D$ can be used to obtain a Two-Phase protocol for $\UINTa{(M h \beta^h)^4,Mh\beta^{h_v-1},M\beta^{h_v}}$ (i.e. $|X| = Mh\beta^{h_v-1}$ and $|Y|=M\beta^{h_v} = \omega(|X|)$) in which the costs are the following:
\begin{itemize}
\item In phase one, Alice sends $o(((1/B)\lg_{\lg N} B) |X|w/h) = o(|X|)$.
\item In phase one, Bob first sends $o(((1/B)\lg_{\lg N} B) hw|Y|) +o(|Y|)= o(|Y|)$ bits. He then sends 
$$
o(((1/B)\lg_{\lg N} B)|Y|Bw/\beta) = o(|Y|(w \lg_{\lg N} B)/\beta) = o(|Y|)$$
bits.
\item In phase two, Alice first sends $o(((1/B)\lg_{\lg N} B)\beta w |X|)+o(|X|) = o(|X|)$ bits. She then sends $o(((1/B)\lg_{\lg N} B) Bw|X|/h) = o(|X|\lg B)$ bits.
\item In phase two, Bob sends $o(((1/B)\lg_{\lg N} B)|Y|w/\beta) = o(|Y|/\sqrt{B})$ bits.
\end{itemize}
Thus under our contradictory assumption, we have obtained a Two-Phase protocol $P$ for 
$$
\UINTa{(M h \beta^h)^4,Mh\beta^{h_v-1},M\beta^{h_v}}
$$ of cost $C(P,\UINTa{N^4,|X|,|Y|}) = (o(|X|),o(|Y|), o(|X|\lg B), o(|Y|/\sqrt{B}))$. But this is impossible by Lemma~\ref{lem_com_lower_si}. We therefore conclude that $\max\{t_I,t_D,t_E\} = \Omega((1/B)\lg_{\lg N} B)$, completing the proof of Theorem~\ref{thm:mainDelete} and thus also Theorem~\ref{thm:main}.

To summarize the main technical idea in our proof, the first phase corresponds to probes made by child $c_k(v)$. These probes are simulated by Alice and thus she will be asking for cell contents. Since a cell address costs a factor $B$ less to specify than its contents, Alice will be very silent in phase one. In phase two, Bob is simulating the probes made by $c_{k+1}(v),\dots,c_{2+\beta}(v)$ into $c_k(v)$. Therefore, he is asking for cell contents and his communication is factor $B$ less than what Alice needs to say to reply with cell contents. Thus one can interpret the Two-Phase communication lower bound as saying: if the data structure does not read enough cells from $c_1(v)$ while processing Deletes in $c_k(v)$'s subtree (small communication in phase one), then the priority queue has not learned which Deletes in $c_k(v)$'s subtree that correspond to keys inserted in $c_1(v)$. It therefore cannot structure the Deletes into a nice data structure supporting efficient ExtractMins in $c_{2+\beta}(v)$ (phase two).

\subsection{Avoiding Deletes of Keys not in the Priority Queue}
\label{sec:avoidspurios}
Recall that we assumed above that we are allowed to perform Delete operations on keys not in the priority queue, and that such spurious Deletes have no effect. We show in the following how to modify our proof such that the lower bound is derived for a sequence of operations that never delete a key not present in the priority queue.

In the previous sections, the hard distribution is defined solely from the tree $\Tree$. The tree $\Tree$ defines $O(N)$ operations on key-priority pairs with keys from a universe of size $O(N^4)$. Here is how to modify the distribution to avoid Deletes on non-inserted elements: Before performing the updates in $\Tree$, start by inserting all keys in the universe, with priorities $\infty$ (or some priority larger than all priorities ever used by $\Tree$). Note that this gives $O(N^4)$ updates, not $O(N)$. Therefore, we perform updates corresponding to $N^3$ independently chosen trees $\Tree_1$, $\Tree_2, \dots, \Tree_{N^3}$, one after the other. Since the lower bound proved in the previous section only has $\lg N$ or $\lg B$'s and we scale the number of operations $N$ by a polynomial factor, the asymptotic lower bound does not change and we can just zoom in on one tree that performs well in expectation and do the same analysis as above. This of course creates the problem that a tree has to "clean up" to make sure that after processing a tree $\Tree_i$, all keys are again in the priority queue with priority $\infty$. This needs further modifications: Every Insert operation in a tree $\Tree$ is transformed into two operations: first a Delete on the key that we are about to insert, and then the Insert. This ensures we never insert anything already there. Every Delete operation in the tree is transformed into two operations: First the Delete operation, and then an Insert with the same key, but priority $\infty$. Finally, all extract-min leaves are augmented as follows: Extract-min leaves already re-insert all key-priority pairs extracted with a priority not matching the current level of the tree. We do not change that. But the extract-min leaves also effectively removes those keys that match in priority. We therefore add an Insert for those keys, now having priority $\infty$. In this way, Alice and Bob can still solve the set disjointness communication game, and we never delete a key not present in the priority queue. The only extra requirement this transformation introduces, is that we now must have $N^{1/4} \geq M^{1+\eps}$, i.e. $N \geq M^{4+\eps}$.

\section{Proof of the Communication Lower Bound}
\label{sec:comm}
\newcommand{\DINT}[1]{\mathcal{D}^{\mathrm{SI}}_{{#1}}}
\newcommand{\UINT}[1]{\mathcal{U}^{\mathrm{SI}}_{{#1}}}
\newcommand{\DIE}[1]{\mathcal{D}^{\mathrm{IE}}_{{#1}}}
\newcommand{\rpub}{R^{\mathrm{pub}}}
\newcommand{\rprivx}{R_A^{\mathrm{priv}}}
\newcommand{\rprivy}{R_B^{\mathrm{priv}}}
\newcommand{\rrpub}{r^{\mathrm{pub}}}
\newcommand{\rrprivx}{r_A^{\mathrm{priv}}}
\newcommand{\rrprivy}{r_B^{\mathrm{priv}}}
\newcommand{\SI}[1]{\mathsf{SetInt}_{{#1}}}
\newcommand{\IE}[1]{\mathsf{IndEq}_{{#1}}}
\newcommand{\eq}{\mathrm{eq}}
\newcommand{\ie}{\mathrm{ie}}
\newcommand{\si}{\mathrm{si}}
\newcommand{\dkl}{D_\mathrm{{KL}}}
\newcommand{\cD}{\mathcal{D}}
\newcommand{\cU}{\mathcal{U}}

In this section, we are going to prove Lemma~\ref{lem_com_lower_si}, a Two-Phase communication lower bound for the set intersection problem on uniformly random inputs.

\paragraph{Hard distribution $\UINT{U,k,l}$ for $\SI{U,k,l}$}
Alice's input $X$ is sampled from all ${U\choose k}$ subsets of $[U]$ of size $k$ uniformly at random. Independently, Bob's input $Y$ is sampled from all ${U\choose l}$ subsets of $[U]$ of size $l$ uniformly at random.

\begin{customlem}{\ref{lem_com_lower_si}}
No Las Vegas Two-Phase protocol $P$ for $\SI{U,k,l}$ can have 
\[
	C(P,\UINT{U,k,l})=(o(k),o(l),o(k\log B),o(k/B))
\]
for any $B>1$, $l\gg k$ and $U\gg l$.
\end{customlem}

 The main idea is to apply \emph{information complexity} arguments and do a chain of reductions:
\begin{enumerate}
\item (Proof of Lemma~\ref{lem_com_lower_si}) Since we are sampling $k$ elements uniformly at random from $[U]$ for Alice's input $X$, and $l$ elements for Bob's input $Y$ for $l\gg k$, we would expect that if the universe is divided into $k$ blocks evenly, then many blocks will have exactly one element in $X$, and if we further divide each block into $l/k$ buckets, many buckets will have exactly one element in $Y$. In this step, we show that if there is an efficient way to solve the problem on all uniform inputs, then there is way to solve it on inputs where every block [bucket resp.] has exactly one element in $X$ [$Y$ resp.]. 
\item (Proof of Lemma~\ref{lem_com_lowerk}) Since the $k$ blocks are disjoint, we are actually solving $k$ \emph{independent} instances of the problem where Alice has one element and Bob has $l/k$ elements, one in each bucket. We show that if such protocol exists, we can solve one instance with $1/k$ of the \emph{information cost} of the original protocol. 
\item (Proof of Lemma~\ref{lem_com_lower1}) In this step, we eliminate Phase I. Alice has only one element, so her input can be viewed as a pair: (bucket number, offset within the bucket). They just want to know if Bob's element in this bucket has exactly the same offset.
We show that since Alice and Bob do not talk too much in Phase I, there is way to fix the transcript such that conditioned on the transcript, neither Alice nor Bob have revealed too much information about their offsets. We will show that even if the bucket number of Alice's input is revealed to Bob at the end of Phase I, the problem is still hard. In this case, their Phase II is just equality testing (offsets within the bucket) with inputs close to uniformly random.
\item (Proof of Lemma~\ref{lem_com_lower_eq}) In this step, we actually ``separate'' the communication in Phase I and Phase II. Although we have already eliminated Phase I in the previous step, if we try to apply rectangle argument directly, it turns out we will just end up adding the information cost of the two phases together in the proof. Thus, before applying any real lower bound argument, we first show that if the inputs are close to uniform, there will be a large subset of the universe such that each of them occurs with decent probability. The input distribution can be viewed as a convex combination of, the uniform distribution on this smaller set (with constant weight), and some other distribution. That is, with constant probability, they are actually solving the problem on a smaller universe, but with uniform inputs. This shows that the protocol can solve the uniform input with a constant factor blow-up in the information cost.
\item (Proof of Lemma~\ref{lem_com_lower_eq_uni}) Apply the standard rectangle argument, we prove a lower bound for the uniform input case. 
\end{enumerate}

As a tool heavily used in the chain of reductions, we define the information cost of a protocol $P$ on input distribution $\mu$ as follows:

\begin{definition}
We say $I(P,\mu)=(a_1,b_1,a_2,b_2)$, if
\begin{enumerate}
\item
	$I(X;\Pi_1|Y,\rpub,\rprivy)\leq a_1$;
\item
	$I(Y;\Pi_1|X,\rpub,\rprivx)\leq b_1$;
\item
	$I(X;\Pi_2|Y,\Pi_1,\rpub,\rprivy)\leq a_2$;
\item
	$I(Y;\Pi_2|X,\Pi_1,\rpub,\rprivx)\leq b_2$,
\end{enumerate}
where $\Pi_1$ is the transcript generated by $P$ on input pair $(X, Y)$ in Phase I, $\Pi_2$ is the transcript of Phase II, $\rpub$ is the public coin, and $\rprivx$ and $\rprivy$ are the private coins used by Alice and Bob respectively.
\end{definition}

That is, in Phase I, Alice reveals at most $a_1$ bits of information, Bob reveals at most $b_1$; in Phase II, Alice reveals at most $a_2$ bits of information and Bob reveals at most $b_2$.

As in the standard communication model, information cost is always a lower bound of communication cost, i.e., $I(P, \mu)\leq C(P,\mu)$. Since we always work with product input distributions in this section, by the following fundamental lemma about communication complexity, the definition remains equivalent if all conditions on $X,Y,\rprivx,\rprivy$ are removed. 

\begin{lemma}\label{lem_trans_indep}
For production distribution on input pair $(X, Y)$, conditioned on a (prefix of a) transcript $\Pi=\pi$ and public coin flips $\rpub=\rrpub$, $(X,\rprivx)$ and $(Y,\rprivy)$ are independent.
\end{lemma}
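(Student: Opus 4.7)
The plan is a straightforward induction on the number of messages exchanged in the transcript prefix $\Pi$. For the base case, no message has been sent, so the conditioning is only on the public coin $\rpub = \rrpub$. By construction of the communication model, the input distribution is assumed to be a product on $(X, Y)$, the private coins $\rprivx, \rprivy$ are sampled independently of each other and of the inputs, and the public coin is independent of all of the above. Hence $(X, \rprivx)$ and $(Y, \rprivy)$ are independent even after conditioning on $\rpub = \rrpub$.

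For the inductive step, assume the conditional independence holds after $i-1$ messages. By the protocol's definition, the speaker in round $i$ is determined by the transcript prefix $\Pi_{\leq i-1}$ together with $\rpub$, and the $i$-th message $M_i$ is a deterministic function of the sender's input, the sender's private coin, the public coin, and the transcript so far. Without loss of generality Alice speaks, so that conditioned on $\Pi_{\leq i-1} = \pi_{\leq i-1}$ and $\rpub = \rrpub$, $M_i$ is a deterministic function of $(X, \rprivx)$ alone. I then invoke the elementary probabilistic fact that if $U \perp V \mid W$ and $D$ is a deterministic function of $(U, W)$, then $U \perp V \mid (W, D)$: conditioning on an event whose occurrence is determined by $(U, W)$ cannot create correlation between $U$ and $V$. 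Applied with $U = (X, \rprivx)$, $V = (Y, \rprivy)$, $W = (\Pi_{\leq i-1}, \rpub)$, and $D = M_i$, this extends the conditional independence from length $i-1$ to length $i$. The case where Bob sends $M_i$ is symmetric, closing the induction; the lemma then follows by applying it to the relevant prefix (in particular, to $\Pi = \Pi_1$ or $\Pi = (\Pi_1, \Pi_2)$).

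The essentially only subtlety, and the one step I would spell out carefully, is ensuring that no hidden conditioning sneaks in: the identity of the speaker in each round and the set of legal messages she may send are both determined by $(\Pi_{\leq i-1}, \rpub)$, so the inductive step does not implicitly condition on any other random variable. One must also note that the statement is vacuous unless $\Pr[\Pi = \pi, \rpub = \rrpub] > 0$, in which case all conditional distributions are well-defined. Beyond this bookkeeping, the argument is entirely routine, and the result is a standard lemma in the information-complexity toolkit.
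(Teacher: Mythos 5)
Your proof is correct, and it follows exactly the route the paper indicates: the paper omits the argument, noting only that ``the lemma can be proved by induction,'' and your induction on the transcript prefix---using that the $i$-th message is a deterministic function of the speaker's input, private coin, the public coin, and the prior transcript, and that conditioning on such a function preserves conditional independence---is the standard and intended argument. No gaps.
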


The lemma can be proved by induction, we will omit its proof here. In the following, we start by proving a lower bound for the base problem $\mathsf{EQ}$, then we will build hardness on top of it.

\paragraph{The $\mathsf{EQ}$ problem and hard distribution $\cU_W\times \cU_W$} Alice and Bob receive two independent uniform random number from set $[W]$. Their goal for $\mathsf{EQ}$ is to determine whether the two numbers equal. 

\begin{lemma}\label{lem_com_lower_eq_uni}
For any Las Vegas (standard One-Phase) protocol $P_{\eq}$ for $\mathsf{EQ}$, let $\Pi$ be the random transcript $P_{\textrm{eq}}$ generated on $(X_{\eq},Y_{\eq})\sim \mathcal{U}_W\times\mathcal{U}_W$ with public coins $\rpub$ and private coins $\rprivx,\rprivy$, $P_{\eq}$ cannot have both
\[
	I(X_{\eq};\Pi|\rpub)\leq\frac{1}{2}\log B,
\]
and
\[
	I(Y_{\eq};\Pi|\rpub)\leq\frac{1}{2B}
\]
for any $B>1$ and $W\geq 2$.
\end{lemma}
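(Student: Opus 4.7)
The plan is to prove the lemma by contradiction via a rectangle-style argument specialized to Las Vegas correctness. Suppose both information bounds hold. By Lemma~\ref{lem_trans_indep}, conditioned on the transcript $\Pi=\pi$ and public coins $\rpub=\rrpub$, the pair $(X_\eq,\rprivx)$ is independent of $(Y_\eq,\rprivy)$; let $S_\pi\subseteq[W]$ and $T_\pi\subseteq[W]$ be the conditional supports of $X_\eq$ and $Y_\eq$, respectively. By this independence, every $(x,y)\in S_\pi\times T_\pi$ occurs with positive conditional probability, and Las Vegas correctness then forces the protocol to output the correct equality bit at every such pair.

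Exploiting this, I would derive a structural dichotomy on $(S_\pi,T_\pi)$: either the two supports are disjoint (a ``not equal'' transcript, for which $|S_\pi|+|T_\pi|\le W$), or they are both equal to the same singleton (an ``equal'' transcript, with $|S_\pi|+|T_\pi|=2$). Since $X_\eq,Y_\eq\sim\cU_W\times\cU_W$, the second case has probability exactly $\Pr[X_\eq=Y_\eq]=1/W$, so averaging over transcripts gives
\[
	\E\bigl[|S_\Pi|+|T_\Pi|\bigr]\ \le\ \tfrac{1}{W}\cdot 2+\bigl(1-\tfrac{1}{W}\bigr)\cdot W\ \le\ W
\]
for every $W\ge 2$.

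On the other hand, the information bounds translate into lower bounds on each term via Jensen's inequality. Because $X_\eq$ is supported on $S_\pi$ given $(\Pi,\rpub)$, we have $H(X_\eq\mid\Pi,\rpub)\le\E[\log|S_\Pi|]$, so the bound $I(X_\eq;\Pi\mid\rpub)\le\tfrac{1}{2}\log B$ gives
\[
	\log\E\bigl[|S_\Pi|\bigr]\ \ge\ \E\bigl[\log|S_\Pi|\bigr]\ \ge\ \log W-\tfrac{1}{2}\log B,
\]
i.e.\ $\E[|S_\Pi|]\ge W/\sqrt{B}$, and the analogous calculation with $1/(2B)$ yields $\E[|T_\Pi|]\ge W\cdot 2^{-1/(2B)}$. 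Combining with the upper bound forces $1/\sqrt{B}+2^{-1/(2B)}\le 1$, which fails for every $B>1$ by the elementary estimate $1-2^{-1/(2B)}\le 1/(2B)<1/\sqrt{B}$.

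The trickiest step will be the dichotomy in the second paragraph, where I have to argue that private coins do not spoil the rectangle property. I plan to handle this by fixing $x\in S_\pi$ and observing that Alice's output $f_A(x,\pi,\rprivx)$ is a deterministic function of its arguments which must equal $\mathbf{1}[x=y]$ for \emph{every} $y\in T_\pi$; if $|T_\pi|\ge 2$ this forces $x\notin T_\pi$, and the symmetric statement about Bob collapses the remaining case to the claimed singleton configuration. Once this structural fact is nailed down, the rest is just Jensen's inequality and a short arithmetic check.
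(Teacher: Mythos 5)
Your proof is correct, and it reaches the contradiction by a genuinely different route from the paper's. The paper applies Markov's inequality twice (and a union bound) to locate a \emph{single} transcript $(\pi,\rrpub)$ for which both conditional entropies are high, then invokes the rectangle property at that one transcript to get $2^{H(X_\eq\mid\pi)}+2^{H(Y_\eq\mid\pi)}\leq W$, and finishes with a short arithmetic estimate $1/B+2^{-1/B}>1$. You instead work \emph{in expectation over all transcripts}: you bound $\E[\lvert S_\Pi\rvert+\lvert T_\Pi\rvert]\leq W$ by a direct averaging over the equal/disjoint dichotomy, and you lower-bound $\E[\lvert S_\Pi\rvert]$ and $\E[\lvert T_\Pi\rvert]$ via Jensen rather than via Markov. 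The core ingredient is identical (the Las Vegas rectangle property, justified by Lemma~\ref{lem_trans_indep}), but your averaging avoids the factor-of-two loss that the paper pays for Markov+union, so your argument would in fact tolerate the weaker hypotheses $I(X_\eq;\Pi\mid\rpub)\leq\log B$ and $I(Y_\eq;\Pi\mid\rpub)\leq 1/B$ with slightly adjusted constants. Two small notational remarks: the supports should really be indexed by both the transcript and the public coins, i.e.\ $S_{\pi,\rrpub}$ and $T_{\pi,\rrpub}$; and your averaging $\tfrac{1}{W}\cdot 2+(1-\tfrac{1}{W})\cdot W\leq W$ does not actually depend on knowing the probability of the ``equal'' case exactly --- since both branches are individually $\leq W$ when $W\geq 2$, the bound $\E[\lvert S_\Pi\rvert+\lvert T_\Pi\rvert]\leq W$ follows immediately. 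Your handling of the dichotomy via the determinism of $f_A(x,\pi,\rprivx,\rrpub)$ on every $y\in T_\pi$ is exactly the right way to neutralize the private coins.
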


\begin{proof}
Assume for contradiction, such protocol $P_{\eq}$ exists. Then by definition, we have
\[
\begin{aligned}
	\frac{1}{2}\log B&\geq I(X_{\eq};\Pi|\rpub) \\
	&=H(X_{\eq})-H(X_{\eq}|\Pi,\rpub) \\
	&=\E_{\pi,\rrpub}[H(X_{\eq})-H(X_{\eq}|\Pi=\pi,\rpub=\rrpub)]
\end{aligned}
\]
Note that $X_{\eq}$ is uniform, the difference inside the expectation is always non-negative. By Markov's inequality, with $\geq 1/2$ probability, $$H(X_{\eq}|\Pi=\pi,\rpub=\rrpub)\geq H(X_{\eq})-\log B.$$ Similarly, with $\geq 1/2$ probability, $$H(Y_{\eq}|\Pi=\pi,\rpub=\rrpub)\geq H(Y_{\eq})-1/B.$$ Thus, by union bound, there exist some $\pi$ and $\rrpub$ such that both conditions hold.

However, since $P_{\eq}$ is zero-error, conditioned on $\Pi=\pi$ and $\rpub=\rrpub$, the set of all values for $X_{\eq}$ that occur with non-zero probability and the set of all values for $Y_{\eq}$ that occur with non-zero probability must form a monochromatic rectangle. Thus, these two sets of values must be pairwise distinct (or pairwise equal, in which case we also have the following), implying
\[
	2^{H(X_{\eq}|\Pi=\pi,\rpub=\rrpub)}+2^{H(Y_{\eq}|\Pi=\pi,\rpub=\rrpub)}\leq W.
\]
However, we also have
\[
\begin{aligned}
	2^{H(X_{\eq}|\Pi=\pi,\rpub=\rrpub)}+2^{H(Y_{\eq}|\Pi=\pi,\rpub=\rrpub)}&\geq 2^{\log W-\log B}+2^{\log W-1/B} \\
	&=W(2^{-\log B}+2^{-1/B}) \\
	&=W\left(\frac{1}{B}+e^{-\frac{\ln 2}{B}}\right) \\
	&\geq W\left(\frac{1}{B}+1-\frac{\ln 2}{B}\right) \\
	&>W.
\end{aligned}
\]
We have a contradiction, such protocol does not exist.
\end{proof}

The following lemma shows that even if the input distribution is slightly non-uniform, the problem is still hard.

\begin{lemma}\label{lem_com_lower_eq}
Fix any two distributions $\mathcal{D}_X$ and $\mathcal{D}_Y$ over $[V]$, such that for $X_{\eq}\sim\mathcal{D}_X$ and $Y_{\eq}\sim\mathcal{D}_Y$, $H(X_{\eq})\geq \log V-1/18$ and $H(Y_{\eq})\geq \log V-1/18$. Then for any Las Vegas (One-Phase) protocol $P_{\textrm{eq}}$ for $\mathsf{EQ}$, let $\Pi$ be the random transcript $P_{\textrm{eq}}$ generated on $(X_{\eq},Y_{\eq})\sim \mathcal{D}_X\times\mathcal{D}_Y$, with public coins $\rpub$ and private coins $\rprivx,\rprivy$, then $P_{\eq}$ cannot have both
\[
	I(X_{\eq};\Pi|\rpub)\leq\frac{1}{12}\log B,
\]
and
\[
	I(Y_{\eq};\Pi|\rpub)\leq\frac{1}{12B}
\]
for any $B>1$ and $V\geq 6$.
\end{lemma}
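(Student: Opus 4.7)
The plan is to reduce this near-uniform case back to the exactly uniform case of Lemma~\ref{lem_com_lower_eq_uni}, via a mixture decomposition of the input distributions combined with a conditioning argument.

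First I extract a common uniform core: the entropy hypothesis is equivalent to $\dkl(\mathcal{D}_X\,\|\,\mathcal{U}_V), \dkl(\mathcal{D}_Y\,\|\,\mathcal{U}_V)\leq 1/18$, and for distributions this close to uniform the Pinsker/chi-squared type inequalities control $\sum_i(\mathcal{D}(i)-1/V)^2$. A Markov-style truncation on the squared deviations therefore yields an absolute constant $c_0 > 0$ and a subset $T \subseteq [V]$ of size $|T| = \Omega(V)$ (with $|T|\geq 2$ once $V\geq 6$) such that $\mathcal{D}_X(i),\mathcal{D}_Y(i) \geq c_0/V$ for every $i \in T$.

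Using $T$, I decompose $\mathcal{D}_X = \alpha\,\mathcal{U}_T + (1-\alpha)\,\nu_X$ and $\mathcal{D}_Y = \alpha\,\mathcal{U}_T + (1-\alpha)\,\nu_Y$ with $\alpha := c_0|T|/V = \Omega(1)$, and enlarge the probability space with independent latent Bernoulli$(\alpha)$ indicators $M_X, M_Y$ selecting which mixture component each input is drawn from. Conditioned on the event $E = \{M_X = M_Y = 1\}$ (of probability $\alpha^2$), the inputs are independent and uniform on $T$; marginally, however, $(X_{\eq}, Y_{\eq})$ is still distributed as $\mathcal{D}_X \times \mathcal{D}_Y$, so the transcript $\Pi$ produced by $P_{\eq}$ is unchanged. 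Standard chain-rule manipulations then give
\[
 I(X_{\eq};\Pi|\rpub, E) \;\leq\; \frac{I(X_{\eq};\Pi|\rpub) + O(1)}{\alpha^2}, \qquad I(Y_{\eq};\Pi|\rpub, E) \;\leq\; \frac{I(Y_{\eq};\Pi|\rpub) + O(1)}{\alpha^2},
\]
where the $O(1)$ additive comes from $H(\mathbb{1}_E)$. With the constants tuned so that $\alpha^2 \geq 1/6$ and this additive fits inside the factor-of-six slack between the hypothesized $\tfrac{1}{12}$ bounds and the $\tfrac{1}{2}$ bounds in Lemma~\ref{lem_com_lower_eq_uni}, the assumed $\tfrac{1}{12}\log B$ and $\tfrac{1}{12B}$ cost bounds give $I(X_{\eq};\Pi|\rpub, E) \leq \tfrac{1}{2}\log B$ and $I(Y_{\eq};\Pi|\rpub, E) \leq \tfrac{1}{2B}$.

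Finally, conditioned on $E$ the inputs are independent and uniform over $T$ with $W := |T| \geq 2$, and $P_{\eq}$ is still a Las Vegas protocol for equality, so the bounds above directly contradict Lemma~\ref{lem_com_lower_eq_uni} applied with this $W$ and the original $B$. The main obstacle is the $Y_{\eq}$ side of the conditional information cost bound: since $\tfrac{1}{12B}$ is tiny for large $B$, the $H(\mathbb{1}_E)/\alpha^2$ additive is in real danger of dominating it, so the argument requires $\alpha$ to be close enough to $1$ (equivalently, the uniform core $T$ not too much smaller than $[V]$) that $h(\alpha^2)$ is small relative to $1/B$. This is exactly what the $1/18$ entropy slack is sized to allow in the first step.
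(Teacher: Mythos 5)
Your high-level plan --- decompose $\cD_X$ and $\cD_Y$ as mixtures each containing a common uniform-on-$T$ component, introduce latent indicators selecting the component, and condition on landing in the uniform component to reduce to Lemma~\ref{lem_com_lower_eq_uni} --- is exactly the paper's strategy. The gap is in how you transfer the information bound to the conditional distribution. Your inequality
\[
I(X_{\eq};\Pi|\rpub, E) \;\leq\; \frac{I(X_{\eq};\Pi|\rpub) + O(1)}{\alpha^2}
\]
carries an additive $O(1)$ coming from the entropy of the indicator of $E$, and you yourself flag that this is dangerous on the $Y$ side, where the budget is $\frac{1}{12B}$. Your proposed escape --- that the $1/18$ entropy slack forces $\alpha$ close enough to $1$ that the binary entropy of $\Pr[E]$ is $o(1/B)$ --- does not hold. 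Pinsker only gives $\delta(\cD_X,\cU_V)\leq 1/6$, which is perfectly compatible with $\cD_X$ assigning zero mass to $\Omega(V)$ elements of $[V]$; the largest uniform mixture weight one can extract this way is a small absolute constant (the paper's construction yields weight $1/6$, nowhere near $1$). So the additive term, after dividing by $\alpha^2$, is a fixed constant $\gg 1$. More fundamentally, $B$ is a free parameter, so no fixed entropy slack in the hypothesis can make a constant additive term beat $\frac{1}{12B}$ uniformly in $B$: as written your argument fails whenever $B$ exceeds some absolute constant, which is precisely the regime of interest.

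The fix is a chain-rule identity that eliminates the additive term entirely, and it is the key step your proposal is missing. Letting $C_X$ indicate which mixture component $X_{\eq}$ was drawn from,
\[
I(X_{\eq};\Pi|\rpub,C_X) = I(X_{\eq},C_X;\Pi|\rpub)-I(C_X;\Pi|\rpub) = I(X_{\eq};\Pi|\rpub)+I(C_X;\Pi|\rpub,X_{\eq})-I(C_X;\Pi|\rpub),
\]
and the crucial observation is $I(C_X;\Pi|\rpub,X_{\eq})=0$: once $X_{\eq}$ is fixed, $C_X$ is conditionally independent of the transcript, since $C_X$ enters the protocol only through $X_{\eq}$. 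Hence $I(X_{\eq};\Pi|\rpub,C_X)\leq I(X_{\eq};\Pi|\rpub)$ with no $H(C_X)$ penalty, and then averaging over the value of $C_X$ (Markov) gives $I(X_{\eq};\Pi|\rpub,C_X=0)\leq 6\,I(X_{\eq};\Pi|\rpub)$, a purely multiplicative blow-up. That factor of $6$ --- not your $\alpha^{-2}$ with an additive correction --- is what the slack between the stated $\tfrac{1}{12}$ bounds and the $\tfrac{1}{2}$ bounds in Lemma~\ref{lem_com_lower_eq_uni} is actually sized to absorb.
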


Note that we cannot hope to increase the $1/18$ in the statement to $1$, as then $\cD_X$ and $\cD_Y$ can have completely disjoint support.

\begin{proof}
Assume for contradiction, such protocol $P_{\eq}$ exists. Then by Pinsker's inequality, 
$$\delta(\mathcal{D}_X,\mathcal{U}_V)\leq \sqrt{\frac{1}{2}\dkl(\mathcal{D}_X\|\mathcal{U}_V)}=\sqrt{\frac{1}{2}\left(\log V-H(X_{\eq})\right)}\leq 1/6.\footnote{$\delta(P,Q):=\sup_{\mathrm{event }A}|P(A)-Q(A)|$.}$$ 
Let $W_X=\left\{x: \Pr_{\cD_X}[X_{\eq}=x]\geq \frac{1}{2V}\right\}$, we must have $|W_X|\geq \frac{2}{3}V$.
Otherwise, $\cD_X$ and $\cU_V$ on the event $X_{\eq}\not\in W_X$ will have difference more than $1/6$.
Similarly, let $W_Y=\left\{y:\Pr[Y_{\eq}=y]\geq \frac{1}{2V}\right\}$, we have $|W_Y|\geq \frac{2}{3}V$. Let $W=W_X\cap W_Y$, by union bound, $|W|\geq \frac{1}{3}V$. 

The distribution $\mathcal{D}_X$ [$\cD_Y$ resp.] can be viewed as a convex combination of two distributions: the uniform distribution $\mathcal{U}_{W}$ over $W$ with weight $1/6$, and some other distribution $\overline{\cD}_X$ [$\overline{\cD}_Y$ resp.] with weight $5/6$. 
Note that such decompositions are possible whenever \[\Pr_{X_{\eq}\sim \cD_X}[X_{\eq}=x]\geq \frac{1}{6}\Pr_{X_{\eq}\sim \cU_{W}}[X_{\eq}=x]\] and \[\Pr_{Y_{\eq}\sim \cD_Y}[Y_{\eq}=y]\geq \frac{1}{6}\Pr_{Y_{\eq}\sim \cU_{W}}[Y_{\eq}=y]\]
hold for every $x$ and $y$. 

Thus, to sample an input pair $(X_{\eq}, Y_{\eq})\sim \cD_X\times \cD_Y$, we can think of it as a two-step process: first decide independently for $X_{\eq}$ [$Y_{\eq}$ resp.], whether to sample them from $\cU_{W}$ or $\overline{\cD}_X$ [$\overline{\cD}_Y$ resp.], then sample them from the corresponding distribution. Let $C_X$ [$C_Y$ resp.] be the random variable indicating the outcome of the first step (taking value $0$ indicates sampling from $\cU_{W}$). 
We are going to show that the Las Vegas protocol $P_{\eq}$ itself is too good for input pair $(X'_{\eq}, Y'_{\eq})\sim\cU_{W}\times \cU_{W}$. Let $\Pi'$ be the (random) transcript generated by $P_{\eq}$ on $(X'_{\eq},Y'_{\eq})$ with public coin $\rpub{}'$. 

We have $I(X'_{\eq};\Pi'|\rpub{}')=I(X_{\eq};\Pi|\rpub,C_X=0)$. On the one hand, $$I(X_{\eq};\Pi|\rpub,C_X)=\frac{1}{6}I(X_{\eq};\Pi|\rpub,C_X=0)+\frac{5}{6}I(X_{\eq};\Pi|\rpub,C_X=1).$$
On the other hand,
\[
	\begin{aligned}
		I(X_{\eq};\Pi|\rpub,C_X)&=I(X_{\eq},C_X;\Pi|\rpub)-I(C_X;\Pi|\rpub) \\
		&=I(X_{\eq};\Pi|\rpub)+I(C_X;\Pi|\rpub,X_{\eq})-I(C_X;\Pi|\rpub) \\
		&\leq I(X_{\eq};\Pi|\rpub).
	\end{aligned}
\]
The last inequality is using the fact that conditioned on $X_{\eq}$, $C_X$ is independent of $(\Pi,\rpub)$, and thus $I(C_X;\Pi|\rpub,X_{\eq})=0$. Therefore, $I(X'_{\eq};\Pi'|\rpub{}')\leq 6I(X_{\eq};\Pi|\rpub)\leq \frac{1}{2}\log B$. Similarly, we have $I(Y'_{\eq};\Pi'|\rpub{}')\leq 6I(Y_{\eq};\Pi|\rpub)\leq \frac{1}{2B}$. However, this is impossible for any Las Vegas protocol by Lemma~\ref{lem_com_lower_eq_uni}. 
\end{proof}

\paragraph{The $\IE{V,L}$ problem and hard distribution $\DIE{V,L}$} Alice receives $X=(F, O)$, where $F$ is a uniformly random number in $[L]$, and $O$ is a uniformly random number in $[V]$. Bob receives $Y=(Y_1,\ldots,Y_L)$, where each $Y_i$ is a uniformly random number in $[V]$. Their goal is to decide whether $O=Y_F$. 

\begin{lemma}\label{lem_com_lower1}
No Las Vegas Two-Phase protocol $P_{\ie}$ for $\IE{V, L}$ can have
\[
	I(P_{\ie},\DIE{V,L})=\left(\frac{1}{216},\frac{L}{288},\frac{\log B}{144},\frac{1}{48B}\right)
\]
for any $B>1$ and $V\geq 6$.
\end{lemma}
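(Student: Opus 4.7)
The plan is to reduce to Lemma~\ref{lem_com_lower_eq} (the lower bound for $\mathsf{EQ}$ on near-uniform inputs). Assume for contradiction that a Two-Phase protocol $P_\ie$ with the claimed information cost exists. I will exhibit a triple $(f^*,\pi_1^*,r^*)$ in the support of $(F,\Pi_1,\rpub)$ such that, conditioned on $F=f^*$, $\Pi_1=\pi_1^*$, $\rpub=r^*$, Phase~II of $P_\ie$ becomes a Las Vegas one-phase protocol for $\mathsf{EQ}$ on $[V]$ with high-entropy product inputs whose information costs violate Lemma~\ref{lem_com_lower_eq}.

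The first step is to convert the four given bounds into bounds on Alice's offset $O$ and Bob's relevant coordinate $Y_F$. For Alice and for Bob's Phase~II this is immediate by chain-rule and data-processing: $I(O;\Pi_1\mid F,\rpub)\leq 1/216$, $I(O;\Pi_2\mid F,\Pi_1,\rpub)\leq\log B/144$, and $I(Y_F;\Pi_2\mid F,\Pi_1,\rpub)\leq I(Y;\Pi_2\mid\Pi_1,\rpub)\leq 1/(48B)$, using that Lemma~\ref{lem_trans_indep} applied to the full transcript gives $Y\perp F$ given $(\Pi_1,\Pi_2,\rpub)$. The delicate one is Bob's Phase~I, where $I(Y;\Pi_1\mid\rpub)\leq L/288$ is an aggregate bound across $L$ coordinates. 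Lemma~\ref{lem_trans_indep} applied to $(\Pi_1,\rpub)$ gives $Y\perp F$ given $(\Pi_1,\rpub)$, hence $H(Y_i\mid F,\Pi_1,\rpub)=H(Y_i\mid\Pi_1,\rpub)$ for every $i$; combined with mutual independence of the $Y_i$'s, subadditivity of entropy yields
\[
\sum_{i=1}^L H(Y_i\mid\Pi_1,\rpub)\ \geq\ H(Y\mid\Pi_1,\rpub)\ \geq\ L\log V-L/288,
\]
so the average per-coordinate leakage $\frac{1}{L}\sum_i I(Y_i;\Pi_1\mid\rpub)\leq 1/288$.

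The second step is a Markov-plus-union-bound argument locating the good triple. Alice's two conditions and Bob's Phase~II condition are naturally bounded under $(F,\Pi_1,\rpub)\sim p$ (the joint distribution), where Markov at thresholds $1/18$, $\log B/12$, $1/(12B)$ gives failure probabilities at most $1/12$, $1/12$, $1/4$. Bob's Phase~I per-coordinate condition is naturally bounded under the product sampling $(f,\pi_1,r)\sim\mathrm{Unif}[L]\times p(\Pi_1,\rpub)$, under which Markov at threshold $1/18$ gives failure probability at most $1/16$. Because $I(F;\Pi_1\mid\rpub)\leq 1/216$, Pinsker's inequality gives total-variation distance at most $\sqrt{1/432}<0.05$ between these two samplings, so the Bob Phase~I failure event has probability at most $1/16+0.05$ under $p$ as well. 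The four failure probabilities under $p$ now sum to strictly less than~$1$, so there exists $(f^*,\pi_1^*,r^*)$ in the support of $p$ simultaneously satisfying
\begin{align*}
H(O\mid F{=}f^*,\pi_1^*,r^*) &\geq\log V-\tfrac{1}{18},\\
H(Y_{f^*}\mid\pi_1^*,r^*) &\geq\log V-\tfrac{1}{18},\\
I(O;\Pi_2\mid F{=}f^*,\pi_1^*,r^*) &\leq\tfrac{\log B}{12},\\
I(Y_{f^*};\Pi_2\mid F{=}f^*,\pi_1^*,r^*) &\leq\tfrac{1}{12B}.
\end{align*}

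Given such a good triple, the one-phase $\mathsf{EQ}$ protocol on $[V]$ is: on inputs $O'$ to Alice and $Y'$ to Bob, Alice sets $O=O',F=f^*$; Bob privately samples $Y_{\neq f^*}$ from the conditional $\mu_{Y_{\neq f^*}\mid Y_{f^*}=Y',\,\pi_1^*,\,r^*}$ and sets $Y_{f^*}=Y'$; both then simulate Phase~II of $P_\ie$ with $\pi_1^*$ and $r^*$ baked in. By Lemma~\ref{lem_trans_indep} the conditional joint of $(O,Y)$ given the fixed triple is product, so on inputs from the conditional marginals of $O$ and $Y_{f^*}$ (both with entropy $\geq\log V-1/18$) the simulation is faithful and Las Vegas, and its information costs equal the two Phase~II quantities above, directly contradicting Lemma~\ref{lem_com_lower_eq}. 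The principal obstacle throughout is the per-coordinate step for Bob's Phase~I: the aggregate bound $I(Y;\Pi_1\mid\rpub)\leq L/288$ could a~priori concentrate entirely on the single coordinate Alice cares about, but Lemma~\ref{lem_trans_indep} rescues us because $Y_f\perp F$ given $(\Pi_1,\rpub)$ makes $Y_f$'s posterior insensitive to conditioning on $F=f$, so subadditivity of entropy averages the $L/288$ budget down to $1/288$ per coordinate.
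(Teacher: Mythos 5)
Your proof is correct, and it takes a genuinely different route from the paper's. Both arguments aim to find a $(f^*,\pi_1^*,r^*)$ so that Phase~II becomes an $\mathsf{EQ}$ instance violating Lemma~\ref{lem_com_lower_eq}, but the machinery for locating it differs. The paper works in two stages: it first applies Markov over $(\pi_1,\rpub)$ to fix a good transcript prefix, then applies Markov over $f$ under the conditional distribution of $F$, and for Bob's Phase~I it must argue that the $\geq 3L/4$ coordinates with high residual entropy carry $\geq 2/3$ of the conditional probability mass of $F$ — this is done via a somewhat delicate concavity-of-$p\log(1/p)$ calculation that hinges on $H(F\mid\pi_1,\rrpub)$ being nearly $\log L$. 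Your proof instead runs a single union bound over the joint $(F,\Pi_1,\rpub)$, and handles the mismatch between the joint law of $(F,\Pi_1,\rpub)$ (under which three of the four Markov arguments are natural) and the product $\mathrm{Unif}[L]\times p(\Pi_1,\rpub)$ (under which Bob's per-coordinate Phase~I Markov is natural) with one application of Pinsker's inequality, using $I(F;\Pi_1\mid\rpub)\leq 1/216$. This cleanly sidesteps the concavity computation. Your derivation of the Bob Phase~II bound $I(Y_F;\Pi_2\mid F,\Pi_1,\rpub)\leq I(Y;\Pi_2\mid\Pi_1,\rpub)$, via two applications of Lemma~\ref{lem_trans_indep} (to $(\Pi_1,\rpub)$ and to $(\Pi_1,\Pi_2,\rpub)$) together with the chain rule, is also a more explicit and arguably more rigorous justification than the paper's one-line pointwise-equality claim at that step, since you only need the identity averaged over $(f,\pi_1,r)$ for the Markov step rather than for each fixed triple. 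One small remark: your invocation of mutual independence of the $Y_i$'s is used to get $H(Y)=L\log V$; the subadditivity $\sum_i H(Y_i\mid\Pi_1,\rpub)\geq H(Y\mid\Pi_1,\rpub)$ itself needs no independence. The numerics all check out, with the four failure probabilities $1/12+1/12+(1/16+1/\sqrt{432})+1/4<1$, and the final simulation step (Alice fixes $(O,F)=(O',f^*)$, Bob samples $Y_{\neq f^*}$ and $\rprivy$ from the conditional and sets $Y_{f^*}=Y'$, both run Phase~II) is a faithful reduction to a one-phase Las Vegas $\mathsf{EQ}$ protocol because the conditional law of $(X,\rprivx)$ and $(Y,\rprivy)$ given $(\pi_1^*,r^*)$ is a product by Lemma~\ref{lem_trans_indep}.
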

\begin{proof}

Assume for contradiction, such protocol $P$ exists. Let $X_{\ie}=(F,O)$ and $Y_{\ie}=(Y_1,\ldots, Y_L)$. By definition, we have
\begin{enumerate}
\item
	$I(F, O;\Pi_1|\rpub)\leq \frac{1}{216}$;
\item
	$I(Y_1,\ldots, Y_L;\Pi_1|\rpub)\leq \frac{L}{288}$;
\item
	$I(X;\Pi_2|\Pi_1,\rpub)\leq \frac{\log B}{144}$;
\item
	$I(Y;\Pi_2|\Pi_1,\rpub)\leq \frac{1}{48B}$.
\end{enumerate}
By the first inequality above and the fact that public coin is independent of the input, we have
\[
\begin{aligned}
	\frac{1}{216}&\geq H(F, O)-H(F, O|\Pi_1,\rpub) \\
	&=\E_{\pi_1,\rrpub}\left[H(F, O)-H(F, O|\Pi_1=\pi_1,\rpub=\rrpub)\right].
\end{aligned}
\]
Since $(F, O)$ is uniform, the difference inside the expectation is always non-negative. By Markov's inequality, with probability $\geq 3/4$, 
\[
H(F, O)-H(F, O|\Pi_1=\pi_1,\rpub=\rrpub)\leq \frac{1}{54}.
\]
That is, $H(F, O|\Pi_1=\pi_1,\rpub=\rrpub)\geq \log L+\log V-1/54$. Similarly, with probability $\geq 3/4$, $H(Y_1,\ldots,Y_L|\Pi_1=\pi_1,\rpub=\rrpub)\geq L\cdot \log V-L/72$. By Markov's inequality again, with probability $\geq 3/4$, $I(X;\Pi_2|\Pi_1=\pi_1,\rpub=\rrpub)\leq \frac{1}{36}\log B$ and with probability $\geq 3/4$, $I(Y;\Pi_2|\Pi_1=\pi_1,\rpub=\rrpub)\leq \frac{1}{12B}$. Thus, by union bound, there exists some $(\pi_1,\rrpub)$ such that all four inequalities hold:
\begin{enumerate}
\item
	$H(F, O|\Pi_1=\pi_1,\rpub=\rrpub)\geq \log L+\log V-\frac{1}{54}$;
\item
	$H(Y_1,\ldots,Y_L|\Pi_1=\pi_1,\rpub=\rrpub)\geq L\cdot \log V-\frac{L}{72}$;
\item
	$I(F, O;\Pi_2|\Pi_1=\pi_1,\rpub=\rrpub)\leq \frac{1}{36}\log B$;
\item
	$I(Y_1,\ldots,Y_L;\Pi_2|\Pi_1=\pi_1,\rpub=\rrpub)\leq \frac{1}{12B}$.
\end{enumerate}

That is, conditioned on $(\Pi_1,\rpub)=(\pi_1,\rrpub)$, the inputs still have high entropy and the players do not reveal too much information about their inputs in Phase II. We are going to show that it is impossible by Lemma~\ref{lem_com_lower_eq}. From now on, we fix $\pi_1$ and $\rrpub$ and condition on $\Pi_1=\pi_1,\rpub=\rrpub$.
\begin{enumerate}
\item
	By chain rule, we have
	\[
	\begin{aligned}
	H(O|\Pi_1=\pi_1,\rpub=\rrpub, F)&=H(F, O|\Pi_1=\pi_1,\rpub=\rrpub) \\
	&\quad-H(F|\Pi_1=\pi_1,\rpub=\rrpub)\\
	&\geq H(F, O|\Pi_1=\pi_1,\rpub=\rrpub)-\log L \\
	&\geq \log V-1/54.
	\end{aligned}
	\]
	By Markov's inequality, $H(O|\Pi_1=\pi_1,\rpub=\rrpub, F=f)\geq \log V-1/18$ with at least $2/3$ probability (over a random $f$). 
\item
	By sub-additivity of entropy, $$\sum_{f=1}^L H(Y_f|\Pi_1=\pi_1,\rpub=\rrpub)\geq L\cdot\log V-L/72.$$ Since each $H(Y_f|\Pi_1=\pi_1,\rpub=\rrpub)\leq \log V$, by Markov's inequality, at least $3L/4$ different $f$'s have $H(Y_f|\Pi_1=\pi_1,\rpub=\rrpub)\geq \log V-1/18$. These $f$'s must occur with high probability, since $F$ has high entropy: $H(F|\Pi_1=\pi_1,\rpub=\rrpub)\geq \log L-1/54$. In fact, if we had
\[
	\sum_{f:H(Y_f|\Pi_1=\pi_1,\rpub=\rrpub)\geq \log V-1/18} \Pr\left[F=f|\Pi_1=\pi_1,\rpub=\rrpub\right]<2/3,
\]
then by the concavity of $p\log\frac{1}{p}$, we would have
\[
\begin{aligned}
H(F|\Pi_1=\pi_1,\rpub=\rrpub)&<\frac{2}{3}\log \frac{0.75L}{2/3}+\frac{1}{3}\log \frac{0.25L}{1/3} \\
&=\frac{2}{3}\log\frac{9L}{8}+\frac{1}{3}\log\frac{3L}{4} \\
&=\log L-\frac{1}{3}\log\frac{256}{243} \\
&< \log L-1/40.
\end{aligned}
\]
Thus, by Lemma~\ref{lem_trans_indep}, $H(Y_F|\Pi_1=\pi_1,\rpub=\rrpub,F=f)\geq \log V-1/18$ with probability at least $2/3$ (over a random $f$).
\item
	By chain rule,
	\[
	\begin{aligned}
	I(O;\Pi_2|\Pi_1=\pi_1,\rpub=\rrpub,F)&=I(F,O;\Pi_2|\Pi_1=\pi_1,\rpub=\rrpub) \\
	&\quad-I(F;\Pi_2|\Pi_1=\pi_1,\rpub=\rrpub) \\
	&\leq \frac{1}{36}\log B.
	\end{aligned}
	\]
	By Markov's inequality, $I(O;\Pi_2|\Pi_1=\pi_1,\rpub=\rrpub,F=f)<\frac{1}{12}\log B$ with probability at least $2/3$ (over a random $f$). 
\item
	By Lemma~\ref{lem_trans_indep}, we have
	\[
	\begin{aligned}
	I(Y_F;\Pi_2|\Pi_1=\pi_1,\rpub=\rrpub,F=f)&=I(Y_f;\Pi_2|\Pi_1=\pi_1,\rpub=\rrpub)\\
	&\leq I(Y;\Pi_2|\Pi_1=\pi_1,\rpub=\rrpub)\\
	&\leq \frac{1}{12B}.
	\end{aligned}
	\]
\end{enumerate}

Thus, by union bound, there is some $f\in [L]$, such that
\begin{enumerate}
\setcounter{enumi}{-1}
\item
	by Lemma~\ref{lem_trans_indep}, $O$ and $Y_F$ are independent conditioned on $\Pi_1=\pi_1,\rpub=\rrpub,F=f$;
\item
	$H(O|\Pi_1=\pi_1,\rpub=\rrpub,F=f)\geq \log V-1/18$;
\item
	$H(Y_F|\Pi_1=\pi_1,\rpub=\rrpub,F=f)\geq \log V-1/18$;
\item
	$I(O;\Pi_2|\Pi_1=\pi_1,\rpub=\rrpub,F=f)\leq\frac{1}{12}\log B$;
\item
	$I(Y_F;\Pi_2|\Pi_1=\pi_1,\rpub=\rrpub,F=f)\leq\frac{1}{12B}$.
\end{enumerate}

That is, when $\Pi_1=\pi_1,\rpub=\rrpub,F=f$, Phase II of protocol $P$ can decide if $Y_F=O$ with low information cost, while both inputs $Y_F$ and $O$ have high entropy. However, Lemma~\ref{lem_com_lower_eq} asserts that it is impossible.
\end{proof}

\paragraph{Hard distribution $\DINT{U,k,l}$ for $\SI{U,k,l}$}
Partition $[U]$ into $k$ blocks $\{A_1,\ldots,A_k\}$ of size $U/k$, and partition each block $A_i$ further into $l/k$ buckets $\{B_{i,1},\ldots,B_{i,l/k}\}$ of size $U/l$. Alice's input set $X$ contains one uniformly random element from each $A_i$ for $1\leq i\leq k$, Bob's input set $Y$ contains one uniformly random element from each $B_{i,j}$ for $1\leq i\leq k$ and $1\leq j\leq l/k$. Thus, $X$ is a subset of $[U]$ of size $k$, and $Y$ is a subset of size $l$.

Note that by associating each bucket a number $Y_i$, and represent $X$ as the pair (bucket number, offset in the bucket), the $\SI{U/k,1,l/k}$ problem on $\DINT{U/k,1,l/k}$ is equivalent to the $\IE{U/l,l/k}$ problem on $\DIE{U/l, l/k}$. 

\begin{lemma}\label{lem_com_lowerk}
No Las Vegas Two-Phase protocol $P_{\si}$ for $\SI{U,k,l}$ can have
\[
	I(P_{\si},\DINT{U,k,l})=\left(\frac{k}{216},\frac{l}{288},\frac{k\log B}{144},\frac{k}{48B}\right),
\]
as a consequence, no $P_{\si}$ can have
\[
	C(P_{\si},\DINT{U,k,l})=\left(\frac{k}{216},\frac{l}{288},\frac{k\log B}{144},\frac{k}{48B}\right)
\]
for any $B>1$, $U\geq 6l$.
\end{lemma}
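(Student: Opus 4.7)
My plan is a direct-sum reduction via information complexity. The key observation, stated just before the lemma, is that $\DINT{U,k,l}$ has $k$ \emph{independent} blocks, and the restriction of the problem to any single block is precisely $\IE{U/l,l/k}$ on $\DIE{U/l,l/k}$. I would assume for contradiction a Two-Phase protocol $P_{\si}$ achieving the stated information cost, and build from it a Two-Phase protocol $P_{\ie}$ whose information cost is $\left(\frac{1}{216},\frac{l/k}{288},\frac{\log B}{144},\frac{1}{48B}\right)$, directly contradicting Lemma~\ref{lem_com_lower1} with $V=U/l$ and $L=l/k$.

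The reduction is the standard index-averaging embedding. Alice and Bob use a fresh public coin to pick $I\in[k]$ uniformly. Alice plants her $\IE$ input $(F,O)$ into block $A_I$, identifying $A_I$ with $[l/k]\times[U/l]$, and privately samples $X_i$ uniformly in $A_i$ for every $i\neq I$; Bob plants his $\IE$ input $(Z_1,\ldots,Z_{l/k})$ into the buckets of $A_I$ and privately samples uniformly in each $B_{i,j}$ for $i\neq I$. By block independence the induced distribution of $(X,Y)$ is exactly $\DINT{U,k,l}$. After running $P_{\si}$, both players know $X\cap Y$ and output $1$ iff $X_I\in Y$, which is precisely $[O=Z_F]$.

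The information-cost analysis uses subadditivity of mutual information under product marginals. Because $X_1,\ldots,X_k$ are mutually independent, for any random variable $Z$ we have $\sum_{i=1}^k I(X_i;Z\mid\rpub)\leq I(X;Z\mid\rpub)$, and similarly for Bob's per-block contributions (his tuples of elements in distinct blocks are independent). Averaging over the uniform public index $I$,
\[
I(F,O;\Pi_1\mid \rpub, I)\;=\;\frac{1}{k}\sum_{i=1}^k I(X_i;\Pi_1\mid \rpub)\;\leq\;\frac{1}{k}\,I(X;\Pi_1\mid \rpub)\;\leq\;\frac{1}{216},
\]
and the analogous calculations applied to Bob's Phase I contribution and to both players' Phase II contributions yield $L/288$, $(\log B)/144$, and $1/(48B)$ respectively. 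This exactly matches the forbidden information cost of Lemma~\ref{lem_com_lower1}, giving the contradiction. Throughout I rely on Lemma~\ref{lem_trans_indep} to drop the conditioning on the other party's input and private randomness appearing in the information-cost definition, which is valid since $\DINT{U,k,l}$ is a product distribution (and adding private samples to fill the non-$I$ blocks keeps it so).

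The ``as a consequence'' clause is immediate from $I(P,\mu)\leq C(P,\mu)$ coordinate-wise: any protocol meeting the prescribed communication cost also meets the prescribed information cost, and is therefore ruled out by the first half. I do not anticipate a substantive obstacle here; the only steps requiring care are verifying that the embedded joint distribution really is $\DINT{U,k,l}$ (immediate from block independence) and invoking subadditivity of mutual information for each of the four information quantities separately.
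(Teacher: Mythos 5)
Your proposal is correct and matches the paper's proof essentially step for step: the same public-coin index-averaging embedding of $\IE{U/l,l/k}$ into a random block $A_I$, the same use of independence across blocks to get $\sum_i I(X_i;\Pi\mid\rpub)\leq I(X;\Pi\mid\rpub)$ (which you call subadditivity but is really superadditivity of mutual information for independent sources — the inequality you wrote is the right one), and the same invocation of Lemma~\ref{lem_com_lower1} with $V=U/l$, $L=l/k$ plus the observation that $I(P,\mu)\leq C(P,\mu)$ coordinatewise yields the communication-cost consequence. No substantive difference from the paper's argument.
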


\begin{proof}
Assume for contradiction, such protocol $P_{\si}$ exists. We show that it can be used to solve $\IE{U/l,l/k}$ on $\DIE{U/l,l/k}$ efficiently by a standard information complexity argument.

Given $P_{\si}$, consider the following protocol $P_{\ie}$ on input pair $(X_{\ie},Y_{\ie})\sim\DIE{U/l,l/k}$:
\begin{enumerate}
\item
	Use \emph{public coin} to uniformly sample a random $I$ from the set $\{1,\ldots,k\}$.
\item
	Generate an input pair $(X, Y)$ for $\SI{U,k,l}$. Partition $[U]$ into $\{A_1,\ldots,A_k\}$ as for $\DINT{U,k,l}$. For block $A_I$, add the corresponding elements in $X_{\ie}$ and $Y_{\ie}$ to $X$ and $Y$ respectively, i.e., for $X_{\ie}=(F, O)$, Alice adds $O+(F-1)U/l+(I-1)U/k$ to $X$, and for each $Y_f$ in $Y_{\ie}$, Bob adds $Y_f+(f-1)U/l+(I-1)U/k$ to $Y$. 
	For all other blocks, they use \emph{private coin} to sample the elements for $(X,Y)$ as in $\DINT{U,k,l}$, i.e., Alice samples one uniformly random element from the block, Bob further partitions it into $l/k$ buckets and samples one uniformly random element from each of them. It is not hard to see that if $(X_{\ie},Y_{\ie})\sim \DIE{U/l,l/k}$, then $(X, Y)\sim \DINT{U,k,l}$.
\item
	Run protocol $P_{\si}$ on $(X, Y)$ to compute the intersection $X\cap Y$, and return the set $X\cap Y\cap A_I$ (with a shift of $-(I-1)U/k$).
\end{enumerate}

Since different blocks are disjoint, the set $X\cap Y\cap A_I$ tells the players whether $Y_F=O$. Thus, as long as $P_{\si}$ is zero-error (for $\SI{U,k,l}$), $P_{\ie}$ is also a zero-error protocol (for $\IE{U/l,l/k}$). To analyze its information cost, let $X^{(i)}=X\cap A_i$, we have
\[
\begin{aligned}
	I(X_{\ie};\Pi_1|\rpub)&=\frac{1}{k}\sum_{i=1}^k I(X_{\ie};\Pi_1|I=i,\rpub(P_{\si})) \\
	&=\frac{1}{k}\sum_{i=1}^k I(X^{(i)};\Pi_1|I=i,\rpub(P_{\si})) \\
	&=\frac{1}{k}\sum_{i=1}^k I(X^{(i)};\Pi_1|\rpub(P_{\si})) \\
	&\leq \frac{1}{k}I(X;\Pi_1|\rpub(P_{\si})) \\
	&\leq \frac{1}{216},
\end{aligned}
\]
where $\rpub(P_{\si})$ is the random coins used in $P_{\si}$.

Similar arguments also prove that $I(Y_{\ie};\Pi_1|\rpub)\leq \frac{l}{288k}$, $I(X_{\ie};\Pi_2|\Pi_1,\rpub)\leq \frac{1}{144}\log B$ and $I(Y_{\ie};\Pi_2|\Pi_1,\rpub)\leq \frac{1}{48B}$. But by Lemma~\ref{lem_com_lower1} (setting $V=U/l$ and $L=l/k$), such protocol does not exist. 
\end{proof}

Finally, we are ready to prove Lemma~\ref{lem_com_lower_si}, a communication lower bound for $\SI{U,k,l}$ on uniformly random inputs $(X, Y)\sim \UINT{U,k,l}$.

\begin{proof}[Proof of Lemma~\ref{lem_com_lower_si}]
Assume for contradiction, such protocol $P$ exists. We are going to design a ``too good'' protocol for $\SI{U/9,k/9,l/9}$ and input distribution $\DINT{U/9,k/9,l/9}$.

Partition $[U]$ into $l$ buckets of size $U/l$ each. Bob's input set $Y$ is sampled uniformly from all ${U\choose l}$ subsets of $[U]$ of size $l$. We can think of the sampling of $Y$ being a two-step process: first decide the number of elements of $Y$ in each bucket, then sample a uniformly random subset of this size from the bucket. Intuitively, with high probability, there will be a constant fraction of the buckets that are going to have exactly one element after the first step, which is stated as the following observation:
\begin{observation}\label{obs_one_ball}
The probability that at least $l/3$ buckets have exactly one element is $1-o(1)$.
\end{observation}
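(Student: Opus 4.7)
The plan is to apply a second-moment / Chebyshev argument. For each bucket $i \in [l]$, let $X_i$ be the indicator of the event that bucket $i$ contains exactly one element of $Y$, and set $Z=\sum_{i=1}^{l} X_i$ to be the number of singleton buckets. The goal is to show that $Z \geq l/3$ with probability $1-o(1)$.

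First, I would compute the mean. Since $Y$ is a uniformly random size-$l$ subset of $[U]$ and each bucket has size $U/l$,
$$\Pr[X_i=1]=\frac{(U/l)\binom{U-U/l}{l-1}}{\binom{U}{l}}.$$
Simplifying ratios of binomial coefficients under the assumption $U\gg l$ shows this equals $(1-1/l)^{l-1}(1+o(1))$, which tends to $e^{-1}$ as $l\to\infty$. Hence $\E[Z]=l\cdot \Pr[X_1=1]$ tends to $l/e$, and for $l$ sufficiently large $\E[Z]\geq 0.36\,l$, exceeding the threshold $l/3$ by a $\Theta(l)$ margin.

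Next, I would bound the variance. An analogous computation gives, for $i\neq j$,
$$\Pr[X_i=1,X_j=1]=\frac{(U/l)^2\binom{U-2U/l}{l-2}}{\binom{U}{l}},$$
and dividing by $\Pr[X_i=1]\Pr[X_j=1]$ shows that this ratio equals $1+O(1/l)$ in the regime $U\gg l$. Therefore each covariance $\mathrm{Cov}(X_i,X_j)$ is $O(1/l)$, and summing over the $O(l^2)$ off-diagonal pairs contributes $O(l)$ to $\Var(Z)$. The diagonal terms satisfy $\sum_i \Var(X_i)\leq \sum_i \E[X_i]=\E[Z]=O(l)$, so $\Var(Z)=O(l)$.

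Finally, Chebyshev's inequality applied with deviation $t=\Theta(l)$ (afforded by the constant gap $1/e - 1/3 > 0$) yields
$$\Pr[Z<l/3]\leq \frac{\Var(Z)}{t^2}=\frac{O(l)}{\Theta(l^2)}=O(1/l)=o(1).$$
The main technical obstacle is the covariance estimate: since the underlying sampling is multivariate hypergeometric rather than independent multinomial, one must carefully manipulate ratios of binomial coefficients, but the $U\gg l$ assumption makes the dependence between distinct buckets vanishingly small and the Poisson-like behavior kicks in.
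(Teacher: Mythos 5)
Your proof is correct and follows the same route as the paper: define an indicator per bucket, compute the first moment (tending to $e^{-1}l$) and pairwise second moment via ratios of binomial coefficients, bound the variance, and apply Chebyshev. The only difference is cosmetic — you assert the sharper variance bound $\Var(Z)=O(l)$ with explicit $O(1/l)$ covariance control, whereas the paper settles for the weaker (and sufficient) $\Var(\xi)=o(l^2)$; both lead directly to the desired $1-o(1)$ conclusion.
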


In order to focus on proving our communication lower bound, we defer the proof of the observation to the end of the section. 
Assume this happens, then group these $l/3$ buckets into $k/3$ blocks, each of which consists of $l/k$ buckets. We now think of $X$ also being sampled by a two-step process: first decide the number of elements of $X$ in each block (which also determines the number of elements outside these $l/3$ buckets), then sample uniformly random sets of the corresponding sizes from the blocks (and from the elements not in any block). A similar argument as Observation~\ref{obs_one_ball} shows that the probability that at least $k/9$ blocks that are going to have exactly one element with probability $1-o(1)$. Call these $k/9$ blocks the \emph{good blocks}, and their $l/9$ buckets the \emph{good buckets}.

Consider the following protocol $P_{\si}$ for $\SI{U/9,k/9,l/9}$ on input $(X_{\si}, Y_{\si})\sim \DINT{U/9,k/9,l/9}$:
\begin{enumerate}
\item
	Generate $(X, Y)\sim \UINT{U,k,l}$ in the following way. Complete the first steps of sampling $X$ and $Y$ as described above using \emph{public coin} (which fails with $o(1)$ probability). If it does not fail, for the $k/9$ good blocks and their $l/9$ good buckets, embed $X_{\si}$ and $Y_{\si}$ into them. For all other blocks and buckets, complete the second steps of sampling. Thus, when $(X_{\si}, Y_{\si})\sim \DINT{U/9,k/9,l/9}$, we have $(X, Y)\sim \UINT{U,k,l}$. 
\item
	If the sampling fails, repeat Step 1. Otherwise, run $P$ on $(X, Y)$, and return the intersection in the good blocks.
\end{enumerate}

It is not hard to verify that $P_{\si}$ solves $\SI{U/9,k/9,l/9}$ with zero-error and communication cost $(o(k),o(l),o(k\log B),o(k/B))$. By Lemma~\ref{lem_com_lowerk}, such protocol does not exist, and we have a contradiction.
\end{proof}

\begin{proof}[Proof of Observation~\ref{obs_one_ball}]
Let $\xi_i$ be the random variable indicating whether $i$-th bucket has exactly one element of $Y$ ($\xi_i=1$ indicates YES). Let $\xi=\sum_{i=1}^l \xi_i$. Thus, by Stirling's formula,
\[
\begin{aligned}
\E[\xi_i]&=\Pr[\xi_i=1] \\
&=\frac{{U-U/l \choose l-1}\cdot U/l}{{U\choose l}} \\
&\sim\frac{(U/l)^{l-1}(1-U/l)^{U-U/l-l+1}}{(U/l)^l(1-U/l)^{U-l}}\cdot \frac{U}{l} \\
&=(1-U/l)^{-U/l+1} \\
&\sim e^{-1}
\end{aligned}
\]
By linearity of expectation, $\E[\xi]\sim e^{-1} l$. On the other hand,
\[
\begin{aligned}
\E[\xi_i\xi_j]&=\Pr[\xi_i=1\wedge \xi_j=1] \\
&=\frac{{U-2U/l\choose l-2}\cdot (U/l)^2}{{U\choose l}} \\
&\sim \frac{(U/l)^{l-2}(1-U/l)^{U-2U/l-l+2}}{(U/l)^l(1-U/l)^{U-l}}\cdot\frac{U^2}{l^2}\\
&=(1-U/l)^{-2U/l+2} \\
&\sim e^{-2}.
\end{aligned}
\]
Thus, we have
\[
\begin{aligned}
\Var[\xi]&=\E[\xi^2]-(\E[\xi])^2 \\
&=\sum_{i=1}^l\E[\xi_i^2]+2\sum_{i<j}\E[\xi_i\xi_j]-(1+o(1))e^{-2}l^2 \\
&=(1+o(1))e^{-1}l+l(l-1)(1+o(1))e^{-2}-(1+o(1))e^{-2}l^2 \\
&=o(l^2).
\end{aligned}
\]
By Chebyshev's inequality, we have $\Pr[\xi<l/3]\leq o(1)$, which proves the observation. 
\end{proof}

\bibliographystyle{abbrv}
\bibliography{bibliography}

\begin{thebibliography}{10}

\bibitem{av-iocsr-88}
A.~Aggarwal and J.~S. Vitter.
\newblock The input/output complexity of sorting and related problems.
\newblock {\em Communications of the ACM}, 31:1116--1127, 1988.

\bibitem{Brodal:pq}
G.~S. Brodal.
\newblock Worst-case efficient priority queues.
\newblock In {\em Proc. 7th ACM/SIAM Symposium on Discrete Algorithms}, SODA
  '96, pages 52--58, 1996.

\bibitem{Brodal:member}
G.~S. Brodal and R.~Fagerberg.
\newblock Lower bounds for external memory dictionaries.
\newblock In {\em Proc. 14th ACM/SIAM Symposium on Discrete Algorithms}, SODA
  '03, pages 546--554, 2003.

\bibitem{brody:setintersect}
J.~Brody, A.~Chakrabarti, R.~Kondapally, D.~P. Woodruff, and G.~Yaroslavtsev.
\newblock Beyond set disjointness: The communication complexity of finding the
  intersection.
\newblock In {\em Proc. 2014 ACM Symposium on Principles of Distributed
  Computing}, PODC '14, pages 106--113, 2014.

\bibitem{cormen:introduction}
T.~H. Cormen, C.~E. Leiserson, and R.~L. Rivest.
\newblock {\em Introduction to Algorithms}.
\newblock The MIT Press, Cambridge, Mass., 1990.

\bibitem{Fadel:heaps}
R.~Fadel, K.~V. Jakobsen, J.~Katajainen, and J.~Teuhola.
\newblock Heaps and heapsort on secondary storage.
\newblock {\em Theoretical Computer Science}, 220(2):345--362, June 1999.

\bibitem{Fredman:fibo}
M.~L. Fredman and R.~E. Tarjan.
\newblock Fibonacci heaps and their uses in improved network optimization
  algorithms.
\newblock {\em Journal of the ACM}, 34(3):596--615, July 1987.

\bibitem{Han:sort}
Y.~Han.
\newblock Deterministic sorting in o(nlog log n) time and linear space.
\newblock In {\em Proc. 34th ACM Symposium on Theory of Computation}, STOC '02,
  pages 602--608, 2002.

\bibitem{HanTho}
Y.~Han and M.~Thorup.
\newblock Integer sorting in {$O(n\sqrt{\log\log n})$} expected time and linear
  space.
\newblock In {\em Proc. 43rd IEEE Symposium on Foundations of Computer
  Science}, pages 135--144, 2002.

\bibitem{patrascu12buffer}
J.~Iacono and M.~P{\v a}tra{\c s}cu.
\newblock Using hashing to solve the dictionary problem (in external memory).
\newblock In {\em Proc. 23rd ACM/SIAM Symposium on Discrete Algorithms}, pages
  570--582, 2012.

\bibitem{Kumar:tournament}
V.~Kumar and E.~J. Schwabe.
\newblock Improved algorithms and data structures for solving graph problems in
  external memory.
\newblock In {\em Proc. 8th IEEE Symposium on Parallel and Distributed
  Processing}, SPDP '96, pages 169--, 1996.

\bibitem{Meyer2006}
U.~Meyer and N.~Zeh.
\newblock {\em I/O-Efficient Undirected Shortest Paths with Unbounded Edge
  Lengths}, pages 540--551.
\newblock 2006.

\bibitem{MNSW:1998}
P.~Miltersen, N.~Nisan, S.~Safra, and A.~Wigderson.
\newblock On data structures and asymmetric communication complexity.
\newblock {\em Journal of Computer and System Sciences}, 57(1):37--49, 1998.

\bibitem{thorup:integerdecrease}
M.~Thorup.
\newblock Integer priority queues with decrease key in constant time and the
  single source shortest paths problem.
\newblock {\em Journal of Computer and System Sciences}, 69(3):330--353, 2004.

\bibitem{Thorup:2007:equiv}
M.~Thorup.
\newblock Equivalence between priority queues and sorting.
\newblock {\em Journal of the ACM}, 54(6), Dec. 2007.

\bibitem{verbin:member}
E.~Verbin and Q.~Zhang.
\newblock The limits of buffering: a tight lower bound for dynamic membership
  in the external memory model.
\newblock In {\em Proc. 42nd ACM Symposium on Theory of Computation}, pages
  447--456, 2010.

\bibitem{wei:equiv}
Z.~Wei and K.~Yi.
\newblock Equivalence between priority queues and sorting in external memory.
\newblock In {\em Proc. 22nd European Symposium on Algorithms}, pages 830--841,
  2014.

\bibitem{Yao1981:tables}
A.~C.-C. Yao.
\newblock Should tables be sorted?
\newblock {\em Journal of the ACM}, 28(3):615--628, 1981.

\bibitem{yi:member}
K.~Yi and Q.~Zhang.
\newblock On the cell probe complexity of dynamic membership.
\newblock In {\em Proc. 21st ACM/SIAM Symposium on Discrete Algorithms}, pages
  123--133, 2010.

\end{thebibliography}
\appendix
\section{Internal Memory DecreaseKeys}
\label{sec:internaldecrease}
In the following, we show how we one can support the DecreaseKey operation at no loss of efficiency for Las Vegas randomized and amortized internal memory data structures when the keys can be efficiently hashed (i.e. one can construct a hash table with expected amortized $O(1)$ time lookups).
We first describe it in the case where keys are never re-inserted after having been extracted. 

Given a priority queue supporting only Insert and ExtractMin, we modify it to support DecreaseKeys. Let $t_I(N)$ be its expected amortized insert time when storing $N$ elements and let $t_E(N)$ be its expected amortized ExtractMin time. We start by supporting DecreaseKeys in expected amortized $O(t_I(N')+t_E(N'))$ time per operation over a sequence of $N'$ operations if we are promised that a key is never re-inserted into the priority queue after having been extracted. This is done as follows: We keep a counter $C$ on the number of operations performed on the priority queue. The counter $C$ is initialized to $0$, and on each operation we increment it by $1$. In addition to $C$, we maintain a priority queue $\D$ supporting only Insert and ExtractMin, as well as a hash table $H$. The hash table $H$ will store the keys that have already been extracted. In more detail, on an Insert($k,p$), we create the key $k \circ C$, i.e. we augment the key with the time of the operation and assume that the data structure treats two keys $k \circ C$ and $k \circ C'$ as distinct keys. We then insert $(k \circ C,p)$ in $\D$ by running its Insert operation. On a DecreaseKey($k,p$), we create the key $k \circ C$ and run Insert($k \circ C, p$) on $\D$. On ExtractMin, we run ExtractMin on $\D$. This returns a pair $(k \circ C_k, p)$, where $C_k$ is the time when $k$ either underwent an Insert or a DecreaseKey. We then do a lookup in $H$ for $k$. If $k$ is not in $H$, we simply return $(k,p)$ as our result and insert $k$ in $H$. If $k$ is in $H$, we know the key has already been extracted with a smaller priority. We then invoke ExtractMin on $\D$ again until we find a key-priority pair that have not yet been extracted.

Since we assume hash tables can be implemented with $O(1)$ expected amortized time operations, and each DecreaseKey operation causes one Insert and one ExtractMin, the size of $\D$ is thus $O(N')$ where $N'$ is the number of operations performed on the priority queue. Therefore the expected amortized cost of the DecreaseKey is $O(t_I(N')+t_E(N'))$.

If we assume keys can be re-inserted after having been extracted, we can store an additional hash table with the last time at which an Insert on a given key was performed. Then when running ExtractMin and seeing a pair $(k \circ C_k,p)$, we also check the value of $C_k$ to see if it is lower than the last time $k$ was inserted. In that case, we also throw away the pair and run another ExtractMin on $\D$.

Finally, it remains to lower the expected amortized time from $O(t_I(N')+t_E(N'))$ to $O(t_I(N)+t_E(N))$ where $N$ is the number of elements actually present in the priority queue at the time of the DecreaseKey. This is done using the standard global rebuilding technique: Initialize $N_0$ to a constant. After $N_0$ operations on the priority queue, all elements are extracted using the ExtractMin operation. They are put on a list $L$. We then throw away the data structure and initialize a new one from scratch. We then re-insert all elements of $L$ in the new data structure and set $N_0$ to $|L|/2$. Since the number of elements in the priority queue is always within a constant factor of the number of operations performed since the last rebuild, the expected amortized running time is $O(t_I(N)+t_E(N))$ as claimed. This trick also ensures that Insert and ExtractMin still run in time $O(t_I(N))$ and $O(t_E(N))$ respectively.

\end{document}